\definecolor{green}{RGB}{0,130,0}
\definecolor{lightgrey}{RGB}{240,240,240}
\lstdefinelanguage{Dedukti}
{
  inputencoding=utf8,
  extendedchars=true,
  numbers=none,
  numberstyle={},
  tabsize=2,
  basicstyle={\ttfamily\upshape\mdseries},
  backgroundcolor=\color{lightgrey},
  keywords={def,thm,injective},
  sensitive=true,
  keywordstyle=\color{blue},
  morecomment=[s]{(;}{;)},
  commentstyle={\itshape\color{red}},
  string=[b]{"},
  stringstyle=\color{orange},
  showstringspaces=false,
}
\newtheorem{definition}{Definition}
\newtheorem{lemma}{Lemma}
\newtheorem{theorem}{Theorem}
\newtheorem{example}{Example}
\def\ra{\rightarrow}
\def\lra{\hookrightarrow}
\def\Type{\mathsf{Type}}
\def\Kind{\mathsf{Kind}}
\newcommand{\wfr}[3]{#1\vdash #2\lra #3}
\def\dom{\mathsf{dom}}
\def\T{\mathbb{T}}
\def\S{\mathbb{S}}
\def\R{\mathcal{R}}
\newcommand{\rrc}[3]{[#1]\,#2\lra #3} 
\def\mor{\mu} 
\def\lr{\rho} 
\newcommand\dk{Dedukti\xspace}
\newcommand{\lpr}{$\lambda\Pi\slash\R$\xspace}
\newcommand{\lpcm}{$\lambda\Pi$-calculus modulo rewriting\xspace}
\def\imp{\mathbin{\Rightarrow}}
\def\conj{\mathbin{\wedge}}
\def\fa{{\forall}}
\def\Set{{\it Set}}
\def\El{{\it El}}
\def\Prop{{\it Prop}}
\def\Prf{{\it Prf}}
\def\arr{\mathbin{\rightsquigarrow}}
\def\o{o}
\def\impd{\mathbin{\Rightarrow'}}
\newcommand{\oft}[2]{#1 ~\#~ #2}
\def\tm{tm}
\def\lam{lam}
\def\unit{unit}
\def\nat{\mathsf{nat}}
\def\int{\mathsf{int}}
\def\succ{\mathsf{succ}}
\def\pred{\mathsf{pred}}
\def\rec{\mathsf{rec}}
\def\ax{\mathsf{ax}}
\def\mult{\times}
\def\inv{\mathsf{inv}}
\def\i{\iota}
\def\refl{\mathsf{refl}}
\def\leib{\mathsf{leib}}
\def\pair{\mathsf{pair}}
\def\mkpair{\mathsf{mk\_pair}}
\def\fst{\mathsf{fst}}
\def\snd{\mathsf{snd}}
\def\impi{\mathsf{imp_i}}
\def\impe{\mathsf{imp_e}}
\def\andel{\mathsf{and_{e\ell}}}
\def\alli{\mathsf{all_i}}
\def\alle{\mathsf{all_e}}
\def\lst{\mathsf{list}}
\def\nil{\mathsf{nil}}
\def\cons{\mathsf{cons}}
\def\hd{\mathsf{hd}}
\def\tl{\mathsf{tl}}
\def\concat{\mathsf{concat}}
\def\List{\ident{List}}
\def\tree{\mathsf{tree}}
\def\leaf{\mathsf{leaf}}
\def\node{\mathsf{node}}
\def\cright{\mathsf{right}}
\def\cleft{\mathsf{left}}
\def\root{\mathsf{root}}
\def\compo{\mathsf{compo}}
\def\Tree{\ident{Tree}}
\newcommand{\UFOL}{\ident{UFOL}}
\newcommand{\SFOL}{\ident{SFOL}}
\newcommand{\HFOL}{\ident{HFOL}}
\newcommand{\HS}{\ident{hs}}
\newcommand{\SU}{\ident{su}}
\newcommand{\ident}[1]{\ensuremath{\mathtt{#1}}\xspace}
\newcommand{\modif}[1]{#1} 
\newcommand{\reviewernote}[1]{} 
\newtheorem{remark}{Remark}
\begin{document}

\title{Formalizing Representation Theorems for a Logical Framework
with Rewriting}
\date{}
\author{Thomas Traversi\'e\thanks{Universit\'e Paris-Saclay, CentraleSup\'elec, MICS, France and Universit\'e Paris-Saclay,  Inria, CNRS, ENS Paris-Saclay, LMF, France, \texttt{thomas.traversie@centralesupelec.fr}} \and Florian Rabe\thanks{University Erlangen-Nuremberg, Germany, \texttt{florian.rabe@fau.de}}}

\maketitle

\begin{abstract}
Representation theorems for formal systems often take the form of an inductive translation that satisfies certain invariants, which are proved inductively.
Theory morphisms and logical relations are common patterns of such inductive constructions.
They allow representing the translation and the proofs of the invariants as a set of translation rules, corresponding to the cases of the inductions.
Importantly, establishing the invariants is reduced to checking a finite set of, typically decidable, statements.
Therefore, in a framework supporting theory morphisms and logical relations, translations that fit one of these patterns become much easier to formalize and to verify.

The $\lambda\Pi$-calculus modulo rewriting is a logical framework designed for representing and translating between formal systems that has previously not systematically supported such patterns.
In this paper, we extend it with theory morphisms and logical relations.
We apply these to define and verify invariants for a number of translations between formal systems.
In doing so, we identify some best practices that enable us to obtain elegant novel formalizations of some challenging translations, in particular sort-erasure translations from sorted to unsorted languages.
\end{abstract}

\section{Introduction}

\paragraph{Motivation and Related Work}
Logical frameworks are meta-languages for formalizing deductive systems. 
The idea originated in \textsc{Automath}~\cite{automath} and was refined, e.g., by the Isabelle system~\cite{isabelle} based on higher-order logic and the Edinburgh Logical Framework~\cite{LF} (LF) based on the dependently-typed $\lambda$-calculus.
A variety of LF-based practical logical frameworks have been developed, including extensions with logic programming in Twelf~\cite{twelf}, abstraction over contexts in Beluga~\cite{beluga}, monadic side conditions in $\text{LLF}_{\mathcal{P}}$~\cite{llfp}, and user-definable features in MMT~\cite{rabe:recon:17}.
Many different logics can be encoded in LF-based frameworks including higher-order logics \cite{LF}, type systems~\cite{theoryU}, modal logics~\cite{lfmodal}, foundations of mathematics \cite{IR:foundations:10}, model theory \cite{HR:folsound:10}, or the calculus of constructions~\cite{theoryU}.

The \lpcm~\cite{lambdapi} (\lpr) extends LF with user-defined rewrite rules, both at the term and the type level.
All terms and types are considered modulo the congruence relation induced by the usual $\beta$-reduction rule and by the user-defined rewrite rules.
\lpr was implemented in the \dk proof language~\cite{expressing,saillard}, designed for exchanging proofs between systems. 
For instance, it was used to translate~\cite{thire} the Matita arithmetic library to several systems including Rocq and PVS, and to export~\cite{holdk} the HOL Light standard library to Rocq.

\modif{The type-level rewriting includes in particular the ability to rewrite an atomic type into a function type.
For example, the type $\Prf ~(p \imp q)$ of proofs of implications can be rewritten into---and thus become equal to---the function type $\Prf ~p \ra \Prf ~q$.
Compared to typical encodings of natural deduction without rewriting, this means that the implication introduction and elimination rules, which normally map back and forth between those two types, can be dropped entirely.
For example, the modus ponens step $\impe ~p ~q ~H_{pq} ~H_p$, with $\impe:\Pi p,q : \Prop. ~\Prf ~(p \imp q) \ra \Prf ~p \ra \Prf ~q$, becomes simply $H_{pq} ~H_p$, thus erasing the explicit application of $\impe$ and its arguments $p$ and $q$.
This simplifies the work of the formalizer and has a major impact on scalability when representing large libraries of theorems, such as those exported from proof assistants, because it leads to much smaller proof terms.
This effect has been leveraged heavily and critically in the existing encodings in \dk.

Additionally, the need to obtain fast implementations of rewriting 
has led the \lpr community to adopt the design choice of an untyped conversion relation, i.e., both the usual $\beta\eta$-equality and rewriting operate on any terms, not just the well-typed ones.

This raises the question whether meta-theorems about LF carry over, or if these generalizations subtly interfere with established intuitions and results.
This is important because reasoning about the meta-theory of deductive systems, such as establishing type or truth preservation of translations, has been a major application of logical frameworks.
In this paper we answer this question positively for two such meta-theorems as described below.
}

Representation theorems often take the form of $\forall\exists$ meta-statements, e.g., expressing that for all terms $t:A$ of language $\S$, there is a term $\mor(t):\mor(A)$ of language $\T$ (where $\S$ and $\T$ are independently formalized in the framework).
There are two approaches to formalizing such representation functions $\mor$.
Firstly, $\mor$ can be implemented in a programming language that treats the terms of $\S$ and $\T$ as data.
These programs are logic programs in~\cite{twelf} or functional programs in~\cite{beluga,delphin}.
The framework then has to verify the meta-theorem by proving the correctness and termination of the program.
Among early big case studies are verifications of cut-elimination~\cite{lfcut} and logic translations~\cite{hol_nuprl2}.

Secondly, the framework can provide explicit support for certain restricted classes of representation functions for which correctness and termination are guaranteed.
These are usually significantly easier for the user to define, their formalization is more elegant, and their verification is decidable and easy to implement.
When applicable, they are usually the superior formalization.
However, their expressivity is limited, and intended applications often hit these limitations.
\emph{Theory morphisms} (also called signature morphisms) were introduced for LF in~\cite{lfencodings}.
Here the function $\mor$ is induced homomorphically on all $\S$-terms from manually supplied translations of all constants of $\S$.
The function $\mor$ is guaranteed to be total, to preserve all judgments, and to be compositional, i.e., commute with substitutions.
Theory morphisms were added to Isabelle in~\cite{isabelle_locales}, to Twelf in~\cite{RS:twelfmod:09}, and to MMT in~\cite{RK:mmt:10}.
They also allow building a module system in the style of \cite{asl}, and  they were used to build a major library of modular logics and representation theorems~\cite{CHKMR:latinabs:11,rabe:howto:14}.

To extend the expressivity of theory morphisms while retaining their simplicity, \cite{logical_relations} introduced \emph{logical relations} for LF.
Here the function $\mor$ is coupled with a second function $\lr$ that establishes additional invariants about $\mor$.
For example, if $\mor$ is a type-erasure translation, then $\lr$ can be used to state and prove a type-preservation invariant.
Parametricity translations~\cite{parametricity_conf,parametricity_journal} closely resemble logical relations for pure type systems.
They were developed in~\cite{keller_lasson} for the calculus of inductive constructions, and \cite{trocq} builds a parametricity-based Rocq plugin for automated proof transfer.

All of these developments were done in the absence of rewriting.
In fact, other than the Maude tool~\cite{maude}, which is based on membership equational logic with rewriting, we are not aware of any tool supporting theory morphisms or related concepts on top of a rewrite system, and none that do so for a dependently-typed $\lambda$-calculus.
Felicissimo~\cite{felicissimo_encodings} effectively defined theory morphisms in \lpr to establish the soundness of an encoding of functional and explicitly-typed pure type systems.
Similarly, \cite{deduktinterp} encoded individual interpretations in which the morphism and the relation are mutually recursive.
However, both lacked a general definition of the concept and a general meta-theorem establishing their properties once and for all.
That is critical to fully leverage morphisms in practice, because it allows shifting most of the work to the framework and leaving only a small amount of work to the formalizer of an individual translation.

\paragraph{Contribution}
Our work follows the morphism-based approach mentioned above in the context of the \lpcm. 
\modif{Our contribution is threefold.}

Firstly, we generalize theory morphisms and logical relations from LF to \lpr, stating all definitions and theorems in full generality.
\modif{Maybe surprisingly, even though the initial definition of \lpr is more complex than that of LF, the proofs down the road become significantly simpler.
The resulting formalism subsumes the translation templates of~\cite{logical_relations} and simplifies the special interpretations of~\cite{deduktinterp}.}

\modif{Secondly, we apply this formalism to define translations that have previously been out of reach for morphism-like methods.
In particular, we formalize a novel translation from hard-sorted to soft-sorted to unsorted logic.
We show that a specific combination of subtle design choices allows representing these translations concisely and then obtaining their soundness from the framework for free.
To our knowledge, this is the first time that such translations are defined in a fully declarative way.
Notably, some of these design choices involve encoding artefacts that have previously, i.e., in the absence of rewriting, proved to be very cumbersome to use at scale.
Here, rewriting enables us to use clever encodings when defining the translations and then eliminate the artefacts when using the translations in practice. 
}

Thirdly, we have implemented the \texttt{TranslationTemplates} tool, which realizes theory morphisms and unary logical relations for \dk.
We have used \texttt{TranslationTemplates} to formalize all examples shown in the sequel.
The implementation and examples are available at
\begin{center}
\url{https://github.com/Deducteam/TranslationTemplates}.
\end{center}

\paragraph{Overview} 
In \Cref{sec_lpcm}, we recap the syntax and typing rules of \lpr. 
Then we define theory morphisms and logical relations for \lpr in \Cref{sec_mor} and \Cref{sec_lr}.
In \Cref{sec_sort} and \Cref{sec_nat_int}, we apply our framework to challenging sort-erasure translations and datatype embeddings.
We describe our implementation in \Cref{sec_implem}.


\section{The \texorpdfstring{$\lambda\Pi$}{lambdaPi}-Calculus Modulo Rewriting}
\label{sec_lpcm}
The Edinburgh Logical Framework, also known as LF or $\lambda\Pi$-calculus, corresponds to simply typed $\lambda$-calculus extended with dependent types. 
\lpr is an extension of LF with user-defined rewrite rules. 

The terms of \lpr are divided into three levels: objects (denoted by $M$ and $N$), types (denoted by $A$ and $B$), and kinds (denoted by $K$).
The syntax of \lpr is given by the following grammars:
\begin{align*}
&\text{\textit{Objects}} &M, N &\Coloneqq c ~|~ x ~|~ \lambda x : A. ~M ~|~ M ~N \\
&\text{\textit{Types}} &A, B &\Coloneqq a ~|~ \Pi x : A. ~B ~|~ \lambda x : A. ~B ~|~ A ~M \\
&\text{\textit{Kinds}} &K &\Coloneqq \Type ~|~ \Pi x : A. ~K \\
&\text{\textit{Terms}} &t, u, v, T &\Coloneqq M ~|~ A ~|~ K ~|~ \Kind
\end{align*}
where $c$ is an object constant, $a$ is a type constant, and $x$ is a variable. 
Dependent products $\Pi x : A. ~B$ (respectively $\Pi x : A. ~K$) are simply written $A \ra B$ (respectively $A \ra K$) when $x$ does not occur in $B$ (respectively $K$). 

Substitutions $\theta$ associate terms to variables. 
We write $t\theta$ for the result of the capture-avoiding substitution of the term $t$ with respect to $\theta$.
Contexts $\Gamma$ are used to specify the type of the free variables, and
theories $\T$ are used to declare the constants and rewrite rules considered by the users. 
Substitutions, contexts and theories are finite sequences, and are written $\varnothing$ when empty. 

Rewrite rules are pairs $M \lra N$ (respectively $A \lra B$).
\modif{The left- and right-hand side of a rewrite rule may contain free variables.
In \Cref{def:wfr}, we will restrict the rules in such a way that all variable types can be inferred.
Therefore, we omit the context that binds these variables from the notation.}
\begin{align*}
&\text{\textit{\modif{Substitutions}}} &\theta &\Coloneqq \varnothing ~|~ \theta, x \leftarrow N \\
&\text{\textit{Contexts}} &\Gamma &\Coloneqq \varnothing ~|~ \Gamma, x : A \\
&\text{\textit{Theories}} &\T &\Coloneqq \varnothing ~|~ \T,\; c : A ~|~ \T,\; a : K ~|~ \T,\; M \lra N ~|~ \T,\; A \lra B
\end{align*}
We write $\T \vdash$ when the theory $\T$ is well formed, $\vdash_\T \Gamma$ when the context $\Gamma$ is well formed, and $\Gamma \vdash_\T t : T$ when the term $t$ has type $T$ in the context $\Gamma$.
For convenience, $\varnothing \vdash_\T t : T$ is simply written $\vdash_\T t : T$. 
We write $\dom(\Gamma)$ for the domain of a context $\Gamma$, and $\dom(\T)$ for the domain of a theory $\T$.
Well-formed syntax is defined by the typing rules given in \Cref{typrules}, \modif{by \Cref{def:wfr} and \Cref{def:conv}}.

\begin{figure}
\begin{flushleft}
\textbf{Contexts}
\end{flushleft}
\begin{mathpar}
\inferrule*[right={[Empty]}]{ }{\vdash_\T \varnothing}

\inferrule*[right={[Decl] $x \notin \dom(\Gamma)$}]{\vdash_\T \Gamma \\ \Gamma \vdash_\T A : \Type}{\vdash_\T \Gamma, x : A}
\end{mathpar}

\begin{flushleft}
\textbf{Theories}
\end{flushleft}
\begin{mathpar}
\inferrule*[right={[Thy-Empty]}]{ }{\varnothing \vdash}

\inferrule*[right={[Decl-Obj] $c \notin \dom(\T$)}]{\T \vdash \\ \vdash_\T A : \Type}{\T, c : A \vdash}

\inferrule*[right={[Decl-Type] $a \notin \dom(\T$)}]{\T \vdash \\ \vdash_\T K : \Kind}{\T, a : K \vdash}

\inferrule*[right={[Rewr-Obj]}]{\T \vdash \\ \wfr \T M N}{\T, M \lra N \vdash}

\inferrule*[right={[Rewr-Type]}]{\T \vdash \\ \wfr \T A B}{\T, A \lra B \vdash}
\end{mathpar}

\begin{flushleft}
\textbf{Objects}
\end{flushleft}
\begin{mathpar}
\inferrule*[right={[Const-Obj] $c : A \in \T$}]{\vdash_\T \Gamma}{\Gamma \vdash_\T c : A}

\inferrule*[right={[Var] $x : A \in \Gamma$}]{\vdash_\T \Gamma}{\Gamma \vdash_\T x : A}

\inferrule*[right={[Abs-Obj]}]{\Gamma \vdash_\T A : \Type \\ \Gamma, x : A \vdash_\T B : \Type \\ \Gamma, x : A \vdash_\T M : B}{\Gamma \vdash_\T \lambda x : A. ~M : \Pi x : A. ~B}

\inferrule*[right={[App-Obj]}]{\Gamma \vdash_\T M : \Pi x : A. ~B \\ \Gamma \vdash_\T N : A}{\Gamma \vdash_\T M ~N : B[x \leftarrow N]}

\inferrule*[right={[Conv-Type] $A \equiv_{\beta(\eta)\R} B$}]{\Gamma \vdash_\T M : A \\ \Gamma \vdash_\T B : \Type}{\Gamma \vdash_\T M : B}
\end{mathpar}

\begin{flushleft}
\textbf{Types}
\end{flushleft}
\begin{mathpar}
\inferrule*[right={[Const-Type] $a : K \in \T$}]{\vdash_\T \Gamma}{\Gamma \vdash_\T a : K}

\inferrule*[right={[Prod-Type]}]{\Gamma \vdash_\T A : \Type \\ \Gamma, x : A \vdash_\T B : \Type}{\Gamma \vdash_\T \Pi x : A. ~B : \Type}

\inferrule*[right={[Abs-Type]}]{\Gamma \vdash_\T A : \Type \\ \Gamma, x : A \vdash_\T K : \Kind \\ \Gamma, x : A \vdash_\T B : K}{\Gamma \vdash_\T \lambda x : A. ~B : \Pi x : A. ~K}

\inferrule*[right={[App-Type]}]{\Gamma \vdash_\T A : \Pi x : B. ~K \\ \Gamma \vdash_\T M : B}{\Gamma \vdash_\T A ~M : K[x \leftarrow M]}

\inferrule*[right={[Conv-Kind] $K \equiv_{\beta(\eta)\R} K'$}]{\Gamma \vdash_\T A : K \\ \Gamma \vdash_\T K' : \Kind}{\Gamma \vdash_\T A : K'}
\end{mathpar}

\begin{flushleft}
\textbf{Kinds}
\end{flushleft}
\begin{mathpar}
\inferrule*[right={[Sort]}]{\vdash_\T \Gamma}{\Gamma \vdash_\T \Type : \Kind}

\inferrule*[right={[Prod-Kind]}]{\Gamma \vdash_\T A : \Type \\ \Gamma, x : A \vdash_\T K : \Kind}{\Gamma \vdash_\T \Pi x : A. ~K : \Kind}
\end{mathpar}
\caption{Well-formedness rules for the syntax of \lpr}
\label{typrules}
\end{figure}

\modif{
\begin{definition}
\label{def:wfr}
Let $\T$ be a theory and $\ell,r$ be two terms. 
We say that the rewrite rule $\ell\lra r$ is \emph{well-formed}, written $\wfr \T\ell r$, when the following holds:
\begin{enumerate}
\item $\ell$ is algebraic but not simply a variable,
\item all free variables of $r$ are also free in $\ell$,
\item the rule preserves typing, i.e., whenever $\Gamma \vdash_\T \ell\theta : T$ for some substitution $\theta$, then $\Gamma \vdash_\T r\theta : T$.
\end{enumerate}
Here a term is called \emph{algebraic} if it is a variable or the application of a constant to algebraic arguments.
\end{definition}

The first two conditions in \Cref{def:wfr} are straightforward: they restrict the left-hand side to be a pattern and allow inferring the types of all free variables in a rewrite rule from their occurrences on the left-hand side.
But the third condition may appear surprising: \lpr does not require, as one might expect, that the left- and right-hand side are well-formed and have the same type.
Instead, it only requires that every rewrite rule preserves typing whenever a substitution instance of the left-hand side is well-formed.
The most common situation where it helps to have this generality is when we linearize rewrite rules.

\begin{remark}[Left-Linear Rules]
\label{leftrule}
Consider an intrinsically typed encoding of polymorphic lists, using constants $\cons$ and $\hd$ that take the underlying type as a first argument.
The natural rewrite rule is $\hd~a~(\cons~a~h~t)\lra h$.

To reuse existing confluence criteria on left-linear rules, it is however preferable to use the linearized version $\hd~a~(\cons~a'~h~t)\lra h$, even though its left-hand side is ill-formed~\cite{blanqui_lics01,saillard}.
This rule is still type-preserving in the sense of \Cref{def:wfr}, because substitution instances of the left-hand side can only be well-typed if they substitute the same terms for $a$ and $a'$.
\end{remark}

While this type-preservation condition makes it more difficult to check individual rules, the following lemma from~\cite{saillard} states that the usual intuition is a sufficient criterion.

\begin{lemma}[Typed Rewrite Rules]
\label{lem:wfr}
Let $\ell \lra r$ be a rewrite rule that satisfies the first two conditions of \Cref{def:wfr}.
If $\Gamma$ declares the variables of $\ell$, and we have $\Gamma\vdash \ell: T$ and $\Gamma\vdash r:T$, then $\ell \lra r$ preserves typing.
\end{lemma}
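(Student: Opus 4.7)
The plan is to reduce the claim to the standard substitution lemma for \lpr. Given an arbitrary $\Gamma'$ and $\theta$ with $\Gamma'\vdash_\T \ell\theta : T'$, I aim to establish two facts: (1) $\theta$ is well-typed as a substitution from $\Gamma$ to $\Gamma'$, i.e., $\Gamma'\vdash_\T x\theta : A\theta$ for each $x:A\in\Gamma$ that actually occurs in $\ell$, and (2) $T\theta\equiv_{\beta(\eta)\R} T'$. By condition 2 of \Cref{def:wfr}, every free variable of $r$ is among those of $\ell$, hence declared in $\Gamma$, so (1) and the assumption $\Gamma\vdash_\T r:T$ allow the usual substitution lemma to conclude $\Gamma'\vdash_\T r\theta : T\theta$. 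Applying [Conv-Type] with (2) then yields $\Gamma'\vdash_\T r\theta : T'$, which is exactly the required type-preservation property.

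The core of the argument is (1), which rests on the algebraic shape of $\ell$. I would proceed by induction on the structure of $\ell$. The leaf case $\ell = c$ is vacuous since no variables are involved. Otherwise $\ell = c~\ell_1\cdots\ell_n$ with a head constant $c : \Pi x_1{:}A_1.\cdots\Pi x_n{:}A_n.~U \in\T$ and each $\ell_i$ algebraic. Repeatedly inverting [App-Obj] (or [App-Type]) in the derivation of $\Gamma'\vdash_\T \ell\theta : T'$ produces, for each $i$, a type $B_i$ with $\Gamma'\vdash_\T \ell_i\theta : B_i$ and $B_i\equiv_{\beta(\eta)\R} A_i[x_j\leftarrow \ell_j\theta]_{j<i}$; this last expression coincides with $A_i\theta$ as read off from the derivation of $\Gamma\vdash_\T\ell:T$. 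When $\ell_i$ is a variable $x$, this gives the desired typing of $x\theta$ directly (via [Conv-Type] to pass from $B_i$ to $A\theta$); otherwise the induction hypothesis applied to $\ell_i$ covers the variables inside it. Fact (2) falls out of the same inversion chain: the output type of the head constant instantiated with the $\ell_i\theta$ yields $T\theta$, which is convertible to $T'$ by the conversion step that closes off the original derivation.

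The main obstacle will be handling the interplay between dependent types and the conversion relation. Application-rule inversion in \lpr holds only up to $\beta(\eta)\R$-conversion, so the types inferred for the sub-expressions of $\ell\theta$ agree with the expected $A_i\theta$ only modulo the congruence. Carrying these conversions through the induction, and in particular making sure that each extracted typing $\Gamma'\vdash_\T x\theta : A\theta$ is genuinely derivable after a [Conv-Type] step, is the delicate bookkeeping part. A mild additional subtlety is non-linearity in $\ell$: if a variable $x$ occurs in two positions, the induction yields two typings of $x\theta$ whose expected types are convertible, so no conflict arises, but this point should be explicitly noted to justify that (1) produces a single coherent substitution typing.
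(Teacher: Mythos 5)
The paper does not actually prove this lemma: it is imported from~\cite{saillard} as a known result, so there is no in-paper proof to compare against. Your reconstruction is, however, essentially the standard argument used there: invert the typing derivation of the algebraic instance $\ell\theta$ along its application spine to recover a well-typed substitution $\theta$ out of $\Gamma$ (restricted to the variables of $\ell$, which by condition~2 suffices for $r$), then apply the substitution lemma to $\Gamma\vdash r:T$ and close with \textsc{Conv-Type}/\textsc{Conv-Kind}. This is sound, and your handling of non-linearity is right. Two points deserve to be made explicit rather than left as ``bookkeeping.'' First, the inversion step silently uses injectivity of $\Pi$ modulo $\equiv_{\beta(\eta)\R}$ to match the domain types $A_i[x_j\leftarrow \ell_j\theta]$ against the product types actually appearing in the derivation; this holds because left-hand sides of rules are algebraic (so $\Pi$ is never a rewritable head) and because the paper restricts to confluent systems, but it is a genuine prerequisite, not mere conversion-chasing. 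Second, to pass from $\Gamma'\vdash_\T x\theta : B_i$ to $\Gamma'\vdash_\T x\theta : A\theta$ via \textsc{Conv-Type} you need $\Gamma'\vdash_\T A\theta:\Type$, which is not immediate; the cleaner route is to state the substitution lemma up to conversion (requiring only $\Gamma'\vdash_\T x\theta : B_x$ with $B_x\equiv_{\beta(\eta)\R} A\theta$) and to obtain well-formedness of the final target type $T'$ from regularity applied to $\Gamma'\vdash_\T \ell\theta:T'$. With those two adjustments your proof goes through.
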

}

\modif{
Before defining the conversion relation, we explain two design choices.

\begin{remark}[Untyped Conversion]
\label{rem:conv}
Type systems, such as LF and \lpr, can be presented with typed conversion (where $\Gamma \vdash M \equiv N : A$ means that $M$ and $N$ are equal \emph{and} well-typed) or untyped conversion (where $M \equiv N$ only means that $M$ and $N$ are equal).

In the presence of rewriting, it is strongly preferable to use the untyped variant \cite{lambdapi,expressing,saillard,theoryU}.
It allows defining and efficiently implementing the conversion relation on the context-free syntax without any dependency on the context or the type system.
To avoid accidentally converting well-typed terms into ill-typed terms, the rules \textsc{Conv-Type} and \textsc{Conv-Kind} check that the converted term is well-typed before using it.

Importantly, this means that proofs of meta-theorems that proceed by induction on derivations can be split into two parts: first establishing the result for conversion and then the result for typing.
This is in contrast to presentations with typed conversion as used in~\cite{logical_relations}, where both results must be proved jointly in a much more complicated mutually recursive induction.
Therefore, the proofs of our main results in \Cref{sec_mor} and \Cref{sec_lr} are simpler than those in~\cite{logical_relations} even though the results are more general.
\end{remark}

\begin{remark}[$\eta$-Conversion and Adequacy]
$\eta$-conversion is necessary in LF-style logical frameworks to obtain adequate encodings of bindings via higher-order abstract syntax.
However, the combination of untyped rewriting and $\eta$ is subtle and must be handled carefully~\cite{blanqui_termination}.
Therefore, \lpr is often presented without $\eta$, and Dedukti provides a flag to turn $\eta$ on or off.

In order to retain compatibility with the existing literature on both \lpr (often without $\eta$) and other logical frameworks (usually with $\eta$), we systematically develop our results for both variants.
In particular, all our meta-theorems hold with and and without $\eta$ unless mentioned otherwise, e.g., in \Cref{ex:muldiviso}.
\end{remark}
}

\modif{
We finally define the conversion relation on (not necessarily well-formed) terms, also called definitional equality.

\begin{definition}[Conversion]
\label{def:conv}
We assume that all terms are once and for all quotiented by $\alpha$-equality.
Let $\T$ be a theory.
The relation $\lra_{\beta\R}$ (respectively $\lra_{\beta\eta\R}$) is the smallest relation, closed by term constructors and substitutions, that is generated by $\beta$-reduction (respectively $\beta$-reduction and $\eta$-expansion) and by the rewrite rules of $\T$. 
The conversion $\equiv_{\beta\R}$ (respectively $\equiv_{\beta\eta\R}$) is the reflexive, symmetric, and transitive closure of $\lra_{\beta\R}$ (respectively $\lra_{\beta\eta\R}$). 

We use the subscript $\beta(\eta)\R$ to indicate that either variant can be used if it globally fixed which variant to use.
\end{definition}

We restrict our attention to theories for which $\lra_{\beta(\eta)\R}$ is confluent and terminating, and thus produces normal forms of terms.
}

\newcommand{\MulGr}{\ident{MulGr}}
\newcommand{\DivGr}{\ident{DivGr}}
\newcommand{\PL}{\ident{PL}}
\newcommand{\PLEq}{\ident{PLeq}}

\begin{example}[\PL]
\reviewernote{\modif{To reduce redundancy, we introduce \PL and \PLEq here that will be used in subsequent examples in Sections 2, 3 and 4.}}
\label{ex_impand}
We define the theory \PL, a fragment of propositional logic with implication and conjunction.
$\Prop$ is the type of propositions and $\Prf$ maps a proposition to the type of its proof.
\begin{flalign*}
&\Prop : \Type & \\
&\Prf : \Prop \ra \Type & \\
&\imp : \Prop \ra \Prop \ra \Prop & \\
&\conj : \Prop \ra \Prop \ra \Prop & 
\end{flalign*}
Every natural deduction inference rule is encoded by an axiom, that is a typed constant.
\begin{flalign*}
&\mathsf{imp_i} : \Pi p,q : \Prop. ~(\Prf ~p \ra \Prf ~q) \ra \Prf ~(p \imp q) & \\
&\mathsf{imp_e} : \Pi p,q : \Prop. ~\Prf ~(p \imp q) \ra (\Prf ~p \ra \Prf ~q) & \\
&\mathsf{and_i} : \Pi p : \Prop. ~\Prf ~p \ra \Pi q : \Prop. ~\Prf ~q \ra \Prf ~(p \conj q) & \\
&\mathsf{and_{e\ell}} : \Pi p,q : \Prop. ~\Prf ~(p \conj q) \ra \Prf ~p & \\
&\mathsf{and_{er}} : \Pi p,q : \Prop. ~\Prf ~(p \conj q) \ra \Prf ~q &
\end{flalign*}
\end{example}

\begin{example}[\PLEq]
The theory \PLEq extends propositional logic \PL with equality. 
$\i$ is the type of individuals. 
We define an equality symbol for elements of type $\i$, along with the reflexivity principle and the Leibniz principle.
\begin{flalign*}
&\i : \Type & \\
&{=} : \i \ra \i \ra \Prop & \\
&\refl : \Pi x : \i. ~\Prf ~(x = x) & \\
&\leib : \Pi x,y : \i. ~\Prf ~(x = y) \ra \Pi P : \i \ra \Prop. ~\Prf ~(P ~x) \ra \Prf ~(P ~y) &
\end{flalign*}
\end{example}

\begin{example}[Groups based on multiplication]
\label{ex_mult}
The theory \MulGr of multiplicative groups extends \PLEq with a multiplication symbol $\mult$, an inverse operation $\inv$, and a neutral element $1$.
\begin{flalign*}
&\mult : \i \ra \i \ra \i & &x \mult 1 \lra x & &x \mult (\inv ~x) \lra 1 & &\inv ~1 \lra 1 \\
&1 : \i & &1 \mult x \lra x & &(\inv ~x) \mult x \lra 1 & &\inv ~(\inv ~x) \lra x \\
&\inv : \i \ra \i & &(x \mult y) \mult z \lra x \mult (y \mult z) & 
\end{flalign*}
\modif{We use rewrite rules to capture the usual associativity, neutrality, and inverseness axioms.
Such axioms therefore derive from the rewrite rules. 
For instance, the axiom of associativity is proved by simply combining $\alli$ and $\refl$.}
We benefit from the computational power of \lpr. 
For example, we obtain the conversion $(\inv ~(\inv ~x)) \mult (\inv ~x \mult y) \equiv_{\beta\R} y$ for free.
\end{example}

\begin{example}[Groups based on division]
\label{ex_div}
The theory \DivGr of division groups extends \PLEq with a division operation $\div$ and a neutral element $1$.
\begin{flalign*}
&\div : \i \ra \i \ra \i & &(x \div y) \div z \lra x \div (y \div (1 \div z)) & &x \div 1 \lra x \\
&1 : \i & &1 \div (1 \div x) \lra x & &x \div x \lra 1 
\end{flalign*}
Using these rewrite rules, we have $((y \div x) \div y) \div (1 \div x) \equiv_{\beta\R} 1$ for free.

\modif{Relative to division, we can define the usual group operation $x \mult y$ as $x \div (1 \div y)$. 
More generally, we will show in \Cref{ex:muldivrel} that \DivGr is isomorphic to \MulGr.}
\end{example}

\begin{remark}[Proof Irrelevance]
When representing proof systems in logical frameworks, it is occasionally important to impose irrelevance conditions on certain type symbols, e.g., on the symbol $\Prf$ from our examples to obtain proof irrelevance for the encoded logic.
Whether or not proof irrelevance can be encoded depends on the specific version of \lpr.
The original version introduced in \cite{lambdapi} used a very general form of rewrite rules with context that allows declaring arbitrary rewrite rules such as
\begin{flalign*}
&\unit:\Type && \star:\unit && \rrc{x:\unit}{x}{\star} && \rrc{p:\Prop,H:\Prf~p}{\Prf~p}{\unit}
\end{flalign*}
This introduces a unit type and rewrites every inhabited proof type into it.
But more recent versions such as the one from~\cite{theoryU} do not allow such rewrite rules in order to simplify confluence analysis and to obtain more efficient implementations.
\modif{They disallow rules whose context cannot be inferred from its left-hand side}, e.g., because it is just a variable (like $x$ above) or because of unused variables (like $H$ above).
\modif{Our \Cref{def:wfr} follows this approach.
However, \Cref{ex:muldivrel} assumes for simplicity that the framework offers some way to encode proof irrelevance.
}
\end{remark}



\section{Theory Morphisms}
\label{sec_mor}
\newcommand{\MDGr}{\ident{MulDivGr}}
\newcommand{\DMGr}{\ident{DivMulGr}}

In this section, we define theory morphisms for \lpr, and we prove the basic Judgment Preservation theorem.
As a running example, we give theory morphisms that translate between \MulGr and \DivGr.

\subsection{Formal Definition}

Theory morphisms from theory $\S$ to theory $\T$ are translations that replace the \textit{constants} of $\S$ by \textit{terms} of $\T$. Such terms are the parameters of the translation and must be provided to perform the translation.

\begin{definition}[Theory morphism]
The mapping $\mor$ defined inductively from a set of parameters $\mor_c$ and $\mor_a$ by
\[
\begin{array}{lllclll}
\mor(x) &= &x &&\mor(\lambda x : A. ~M) &= &\lambda x : \mor(A). ~\mor(M) \\
\mor(c) &= &\mor_c &&\mor(\lambda x : A. ~B) &= &\lambda x : \mor(A). ~\mor(B) \\
\mor(a) &= &\mor_a &&\mor(\Pi x : A. ~B) &= &\Pi x : \mor(A). ~\mor(B) \\
\mor(M ~N) &= &\mor(M) ~\mor(N) &&\mor(\Pi x : A. ~K) &= &\Pi x : \mor(A). ~\mor(K) \\
\mor(A ~M) &= &\mor(A) ~\mor(M) &&\mor(\Kind) &= &\Kind \\
\mor(\Type) &= &\Type && \\
\end{array}
\]
is a theory morphism from theory $\S$ to theory $\T$ when:
\begin{enumerate}
\item for every constant $c : A \in \S$, there exists a term $\mor_c$ such that $\vdash_\T \mor_c : \mor(A)$,
\item for every constant $a : K \in \S$, there exists a term $\mor_a$ such that $\vdash_\T \mor_a : \mor(K)$,
\item for every rewrite rule $\ell \lra r \in \S$, we have $\mor(\ell) \equiv_{\beta(\eta)\R} \mor(r)$,
\end{enumerate} 
where $\mor$ is defined on contexts and substitutions by
\[
\begin{array}{lll}
\mor(\varnothing) &= &\varnothing \\
\mor(\Gamma, x : A) &= &\mor(\Gamma), x : \mor(A)\\
\mor(\theta, x \leftarrow M) &= &\mor(\theta), x \leftarrow \mor(M).
\end{array}
\]
\end{definition}

The first two conditions are the same as in LF: the constants of $\S$ must be mapped to terms of $\T$ that have the correct type. 
When extending theory morphisms from LF to \lpr, we require as a third condition that, for every rewrite rule $\ell \lra r$ of $\S$, we have the conversion $\mor(\ell) \equiv_{\beta\R} \mor(r)$ in $\T$ \modif{(or $\mor(\ell) \equiv_{\beta\eta\R} \mor(r)$ if we use the variant of \lpr with $\eta$).}
Under this condition, we will see that \emph{convertibility} is preserved, i.e., if $t \equiv_{\beta(\eta)\R} u$ in $\S$ then $\mor(t) \equiv_{\beta(\eta)\R} \mor(u)$ in $\T$.
\modif{Intuitively, the three conditions of theory morphisms ensure that the target theory has at least the logical and computational strength as the source theory.}

\begin{remark}
Felicissimo~\cite[see Long version]{felicissimo_encodings} required as a third condition that, for every rewrite rule $\ell \lra r$ of $\S$, we have the rewriting $\mor(\ell) \lra^*_{\beta\R} \mor(r)$ in $\T$ (where $\lra^*_{\beta\R}$ is the reflexive and transitive closure of $\lra_{\beta\R}$). 
Under this condition, \emph{rewritability} is preserved, i.e., if $t \lra^*_{\beta\R} u$ in $\S$ then $\mor(t) \lra^*_{\beta\R} \mor(u)$ in $\T$. 
Felicissimo's condition on rewriting is sufficient to prove our condition on conversion, but it is not necessary. 
For instance, consider $A_1, A_2, A_3$ of type $\Type$ in $\S$, with $A_1 \lra A_3$, and $B_1, B_2, B_3$ of type $\Type$ in $\T$, with $B_1 \lra B_2$ and $B_3 \lra B_2$. 
We set $\mor(A_i) = B_i$ for $i \in \llbracket 1, 3 \rrbracket$. 
We indeed have $\mor(A_1) = B_1 \equiv_{\beta\R} B_3 = \mor(A_3)$ in $\T$, but we do not have $\mor(A_1) \lra^*_{\beta\R} \mor(A_3)$.
For the Judgment Preservation theorem, we only need to preserve convertibility, hence our definition of theory morphisms is more general than Felicissimo's definition.
\end{remark}

\modif{\begin{example}[Identity morphism]
Let $\T$ be a theory. The identity morphism from $\T$ to $\T$ maps each constant to itself, and therefore each term to itself. The conditions of theory morphism are trivially satisfied. 
\end{example}}

\begin{example}[Morphism $\MDGr:\MulGr\to\DivGr$]
We define a morphism \MDGr from \MulGr to \DivGr. All the constants of \PLEq are mapped to themselves.
\begin{flalign*}
&\mor(\mult) = \lambda x,y : \i. ~x \div (1 \div y) & \\
&\mor(1) = 1 & \\
&\mor(\inv) = \lambda x : \i. ~1 \div x & 
\end{flalign*}
For every rewrite rule $\ell \lra r$ of the multiplication group, we can easily show that $\mor(\ell)$ and $\mor(r)$ are convertible using the rewrite rules of \DivGr. For the rewrite rule $x \mult (\inv ~x) \lra 1$, we have $\mor(x \mult (\inv ~x)) \equiv_{\beta\R} x \div (1 \div (1 \div x))$. Since $(1 \div (1 \div x)) \lra x$ and $x \div x \lra 1$, we get $\mor(x \mult (\inv ~x)) \equiv_{\beta\R} 1 \equiv_{\beta\R} \mor(1)$.
\end{example}

\begin{example}[Morphism $\DMGr:\DivGr\to\MulGr$]
We define a morphism \DMGr from \DivGr to \MulGr. All the constants of \PLEq are mapped to themselves.
\begin{flalign*}
&\mor(\div) = \lambda x,y : \i. ~x \mult (\inv ~y) & \\
&\mor(1) = 1 &
\end{flalign*}
For every rewrite rule $\ell \lra r$ of \DivGr, we can easily show that $\mor(\ell)$ and $\mor(r)$ are convertible using the rewrite rules of the multiplication group. For the the rewrite rule $1 \div (1 \div x) \lra x$, we have $\mor(1 \div (1 \div x)) \equiv_{\beta\R} 1 \mult (\inv ~(1 \mult \inv ~x))$. Since $1 \mult x \lra x$ and $\inv ~(\inv ~x) \lra x$, we get $\mor(1 \div (1 \div x)) \equiv_{\beta\R} x \equiv_{\beta\R} \mor(x)$.
\end{example}

\modif{To show that \MDGr and \DMGr are isomorphisms, we have to show that the composition $\MDGr;\DMGr$ and the identity morphism of \MulGr are equal, and accordingly for the opposite composition.
In \Cref{ex:muldiviso} we show that the former condition indeed holds \emph{definitionally}, i.e., up to $\equiv_{\beta\eta\R}$.
Later in \Cref{ex:muldivrel}, we show how a \emph{propositional} equality of morphisms can be proved in a situation where definitional equality is not strong enough.}

\begin{example}[Morphism $\MulGr\to\MulGr$]
\label{ex:muldiviso}
The composition $\MDGr;\DMGr$ is a theory morphism from the multiplication group to itself.
In particular, we get the following parameters.
\begin{flalign*}
&\mor(\mult) = \lambda x,y : \i. ~x \mult \inv ~(1 \mult \inv ~y) & \\
&\mor(1) = 1 & \\
&\mor(\inv) = \lambda x : \i. ~1 \mult \inv ~x &
\end{flalign*}
\modif{Note that, thanks to the rewrite rules of \MulGr, we have $\mor(\mult) \equiv_{\beta\R} \lambda x,y : \i. ~x \mult y$ and $\mor(\inv) \equiv_{\beta\R} \lambda x : \i. ~\inv ~x$.
Therefore, the variant of \lpr with the $\eta$-rule suffices to have $\mor(\mult) \equiv_{\beta\eta\R} \mult$ and $\mor(\inv) \equiv_{\beta\eta\R} \inv$.
In that case, the morphism $\MDGr;\DMGr$ and the identity morphism of \MulGr are definitionally equal.}
\end{example}

\subsection{Judgment Preservation Theorem}

The main property of theory morphisms is that this translation preserves convertibility and judgments. 
Once we have specified the parameters of a theory morphism, we can therefore translate any typing judgment from the source theory to the target theory.
In particular, we can transfer proofs between different theories of \lpr.

\begin{theorem}[Judgment Preservation]
\label{thm_morphism}
Let $\mor$ be a theory morphism from $\S$ to $\T$.
\begin{enumerate}
\item If $\vdash_\S \Gamma$, then $\vdash_\T \mor(\Gamma)$.
\item If $\Gamma \vdash_\S M : A$, then $\mor(\Gamma) \vdash_\T \mor(M) : \mor(A)$.
\item If $\Gamma \vdash_\S A : K$, then $\mor(\Gamma) \vdash_\T \mor(A) : \mor(K)$.
\item If $\Gamma \vdash_\S K : \Kind$, then $\mor(\Gamma) \vdash_\T \mor(K) : \Kind$.
\end{enumerate}
\end{theorem}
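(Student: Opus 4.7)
The plan is to exploit the decoupling pointed out in \Cref{rem:conv}: with untyped conversion, I can prove the theorem in two largely independent stages, first at the purely syntactic level and then at the typing level, without ever mixing the two inductions.

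Stage 1 is a substitution commutation lemma $\mor(t[x\leftarrow N]) = \mor(t)[x\leftarrow \mor(N)]$, and more generally $\mor(t\theta)=\mor(t)\mor(\theta)$, proved by routine structural induction on $t$; this works because $\mor$ is defined compositionally and is the identity on variables. Stage 2 is preservation of conversion: $t\equiv_{\beta(\eta)\R} u$ in $\S$ implies $\mor(t)\equiv_{\beta(\eta)\R}\mor(u)$ in $\T$. Since $\equiv_{\beta(\eta)\R}$ is the reflexive-symmetric-transitive closure of the compatible closure of the basic reductions, it suffices to check each generator. For a $\beta$-redex, $\mor((\lambda x:A.\,M)\,N)=(\lambda x:\mor(A).\,\mor(M))\,\mor(N)\to_\beta \mor(M)[x\leftarrow \mor(N)] = \mor(M[x\leftarrow N])$, using Stage 1; $\eta$ is analogous. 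For an instance $\ell\theta\to r\theta$ of a rewrite rule of $\S$, Stage 1 gives $\mor(\ell\theta)=\mor(\ell)\mor(\theta)$ and $\mor(r\theta)=\mor(r)\mor(\theta)$, and condition (3) of theory morphism provides $\mor(\ell)\equiv_{\beta(\eta)\R}\mor(r)$ in $\T$; closure of $\equiv_{\beta(\eta)\R}$ under substitutions then yields the desired conversion. Congruence closure transfers through $\mor$ since $\mor$ commutes with every term constructor.

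Stage 3 is the main statement, proved by a single simultaneous induction on the derivations of the four judgments. Most rules (\textsc{Abs-Obj}, \textsc{App-Obj}, \textsc{Prod-Type}, \textsc{Abs-Type}, \textsc{App-Type}, \textsc{Sort}, \textsc{Prod-Kind}, \textsc{Decl}, \textsc{Empty}) are immediate from the IH together with the fact that $\mor$ commutes with the relevant syntactic constructors. The distinctive cases are: (\textsc{Const-Obj}) and (\textsc{Const-Type}), where $\mor(c)=\mor_c$ (resp.\ $\mor(a)=\mor_a$) and condition (1) (resp.\ (2)) gives typing in the empty $\T$-context, which a standard weakening lemma for \lpr lifts to $\mor(\Gamma)$; (\textsc{Var}), which is direct since $\mor(x)=x$ and $x:\mor(A)\in\mor(\Gamma)$ whenever $x:A\in\Gamma$; and (\textsc{Conv-Type}), (\textsc{Conv-Kind}), which combine the IH applied to the two premises with Stage 2 applied to the conversion side-condition $A\equiv_{\beta(\eta)\R} B$.

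The main obstacle is essentially organisational rather than mathematical. One must ensure that the mutual induction is set up so that context well-formedness, object typing, type typing, and kind typing are handled together, since \textsc{Decl} refers to type-level typing while \textsc{Const-Obj} and \textsc{Var} refer back to context well-formedness. Relatedly, the constant cases depend on a weakening lemma for \lpr, which is standard but must be stated separately. Crucially, because conversion is untyped, there is no circular dependency between Stage 2 and Stage 3: the conversion preservation needed by the conversion rules is already available as a black box, which is exactly the simplification over presentations with typed conversion noted in \Cref{rem:conv}.
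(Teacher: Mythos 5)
Your proposal is correct and follows essentially the same route as the paper's proof: a substitution commutation lemma, a conversion-preservation lemma proved by induction on the formation of $\equiv_{\beta(\eta)\R}$ (handling $\beta$, $\eta$, and rewrite-rule instances via condition (3)), and then an induction on typing derivations using weakening for the constant cases and conversion preservation for \textsc{Conv-Type}/\textsc{Conv-Kind}. The only detail worth flagging is that the \textsc{App-Obj} and \textsc{App-Type} cases are not quite ``immediate from commuting with constructors'': they need your Stage~1 lemma to identify $\mor(B[x\leftarrow N])$ with $\mor(B)[x\leftarrow\mor(N)]$ in the result type, exactly as the paper notes.
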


\modif{This theorem extends the Judgment Preservation theorem of~\cite{logical_relations} from LF to \lpr, and generalizes the one of~\cite{felicissimo_encodings} as we consider a broader definition of theory morphisms.}
The theorem relies on the substitution lemma, which states that morphism and substitution application commute with each other, and on the conversion lemma, which states that convertibility is preserved by the morphism.

\begin{lemma}[Substitution]
\label{prop_subst_mu}
Let $\mor$ be a theory morphism from $\S$ to $\T$, and $\theta$ be a substitution.
Then, \modif{up to $\alpha$-renaming}, we have:
\begin{enumerate}
\item $\mor(M\theta) = \mor(M)\mor(\theta)$,
\item $\mor(A\theta) = \mor(A)\mor(\theta)$,
\item $\mor(K\theta) = \mor(K)\mor(\theta)$.
\end{enumerate}
\end{lemma}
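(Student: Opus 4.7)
The plan is to prove the three statements simultaneously by mutual induction on the structure of the terms $M$, $A$, and $K$, exploiting the compositional definition of $\mor$ and the mutually recursive shape of the object and type grammars.

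For the base cases, the only nontrivial observation concerns constants. If $c:A\in\S$, then $c\theta = c$, so $\mor(c\theta) = \mor_c$; on the other hand $\mor(c)\mor(\theta) = \mor_c\mor(\theta)$, and I would use the fact that $\vdash_\T \mor_c : \mor(A)$ forces $\mor_c$ to be closed, so that substitution acts as the identity on it. The same reasoning handles a type constant $a:K\in\S$ using $\vdash_\T \mor_a : \mor(K)$. For a variable $x$, the equality $\mor(x\theta) = \mor(\theta)(x) = \mor(x)\mor(\theta)$ is immediate from the clauses defining $\mor$ on substitutions and on variables (splitting on whether $x\in\dom(\theta)$). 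The cases for $\Type$ and $\Kind$ are trivial since both are fixed by $\mor$ and by substitution.

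For the inductive step on applications $M~N$ and $A~M$, I would simply unfold the definition of $\mor$ on applications, push $\theta$ inside, apply the induction hypotheses to each subterm, and conclude. The main obstacle, as usual in substitution lemmas, lies in the binder cases $\lambda x:A.~M$, $\lambda x:A.~B$, $\Pi x:A.~B$, and $\Pi x:A.~K$. Here I would rely on the ``up to $\alpha$-renaming'' clause to choose a representative where the bound variable $x$ is fresh with respect to $\dom(\theta)$ and with respect to all free variables in the range of $\theta$; this makes capture-avoiding substitution commute straightforwardly with the binder. For example, for $\Pi x:A.~B$ we compute
\[
\mor\bigl((\Pi x:A.~B)\theta\bigr) = \mor\bigl(\Pi x:A\theta.~B\theta\bigr) = \Pi x:\mor(A\theta).~\mor(B\theta),
\]
and then the induction hypotheses give $\mor(A\theta) = \mor(A)\mor(\theta)$ and $\mor(B\theta) = \mor(B)\mor(\theta)$, whence
\[
\mor\bigl((\Pi x:A.~B)\theta\bigr) = \Pi x:\mor(A)\mor(\theta).~\mor(B)\mor(\theta) = \bigl(\Pi x:\mor(A).~\mor(B)\bigr)\mor(\theta) = \mor(\Pi x:A.~B)\mor(\theta),
\]
the last equality again using freshness of $x$ w.r.t.\ $\mor(\theta)$. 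The remaining three binder cases are entirely parallel.

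The subtle points to be careful about are therefore twofold: first, justifying that $\mor_c$ and $\mor_a$ are closed (which follows from their typing in the empty context, as required by the definition of theory morphism), so that they are indeed unaffected by $\mor(\theta)$; and second, handling the $\alpha$-convention uniformly so that substitution distributes under binders both before and after applying $\mor$. Neither step requires any use of the third clause of the definition of theory morphism (the one about rewrite rules), since this lemma is purely syntactic and does not invoke conversion.
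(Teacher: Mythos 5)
Your proposal is correct and follows essentially the same route as the paper, namely a structural induction on the term with the variable case split on whether $\theta$ acts on $x$; the paper simply leaves the constant and binder cases to ``the induction hypotheses'' where you spell them out. Your explicit justification that $\mor_c$ and $\mor_a$ are closed (being typed in the empty context) and hence unaffected by $\mor(\theta)$ is a detail the paper glosses over but is indeed needed for the constant cases.
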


\begin{proof}
\modif{We show the first item by induction on $M$. 
Suppose that $M$ is a variable $x$. 
If $\theta$ does not substitute $x$, then by definition $\mor(\theta)$ does not substitute $x$ either, so $\mor(x\theta) = \mor(x) = \mor(x)\mor(\theta)$. 
If $\theta$ substitutes $x$ by $N$, then by definition $\mor(\theta)$ substitutes $x$ by $\mor(N)$, so $\mor(x\theta) = \mor(N) = \mor(x)\mor(\theta)$.
The other cases are shown using the induction hypotheses.
We prove the second and third items similarly.}
\end{proof}

\begin{lemma}[Conversion]
\label{prop_conv_mu}
Let $\mor$ be a theory morphism from $\S$ to $\T$.
\begin{enumerate}
\item If $A \equiv_{\beta(\eta)\R} B$ in $\S$, then $\mor(A) \equiv_{\beta(\eta)\R} \mor(B)$ in $\T$.
\item If $K \equiv_{\beta(\eta)\R} K'$ in $\S$, then $\mor(K) \equiv_{\beta(\eta)\R} \mor(K')$ in $\T$.
\end{enumerate}
\end{lemma}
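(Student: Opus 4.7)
The plan is to prove a single stronger unified statement that covers both items: for any (not necessarily well-formed) terms $t$ and $u$, if $t \equiv_{\beta(\eta)\R} u$ in $\S$ then $\mor(t) \equiv_{\beta(\eta)\R} \mor(u)$ in $\T$. Both items of the lemma follow by instantiating this with types and with kinds. The usefulness of working on raw terms here is exactly what \Cref{rem:conv} advertises: because conversion is defined on untyped syntax, we can do the induction without any side conditions about typing.

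First I would reduce the problem to one-step reductions. Since $\equiv_{\beta(\eta)\R}$ is the reflexive, symmetric, transitive closure of $\lra_{\beta(\eta)\R}$, it is enough to show: if $t \lra_{\beta(\eta)\R} u$ in $\S$, then $\mor(t) \equiv_{\beta(\eta)\R} \mor(u)$ in $\T$. Note that we land in conversion, not in reduction — this is necessary because even a single rewrite step in $\S$ may correspond to several steps (in either direction) in $\T$. Then I would proceed by induction on the derivation of $t \lra_{\beta(\eta)\R} u$, inspecting each generator and closure rule of the relation given in \Cref{def:conv}.

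The base cases are the three kinds of generating steps. For $\beta$-reduction, $\mor\bigl((\lambda x : A. ~M)~N\bigr) = (\lambda x : \mor(A). ~\mor(M))~\mor(N)$ by definition of $\mor$, which $\beta$-reduces to $\mor(M)[x \leftarrow \mor(N)]$; by the Substitution Lemma (\Cref{prop_subst_mu}), this equals $\mor(M[x \leftarrow N])$. For $\eta$-expansion (in the $\beta\eta\R$-variant), the same calculation, together with the definition of $\mor$ on abstractions, shows preservation up to $\alpha$-renaming. For a rewrite rule $\ell \lra r$ of $\S$ applied under a substitution $\theta$, we have by the Substitution Lemma that $\mor(\ell\theta) = \mor(\ell)\,\mor(\theta)$ and $\mor(r\theta) = \mor(r)\,\mor(\theta)$; by condition~3 of theory morphisms, $\mor(\ell) \equiv_{\beta(\eta)\R} \mor(r)$ in $\T$, and since conversion is closed under substitution, we obtain $\mor(\ell\theta) \equiv_{\beta(\eta)\R} \mor(r\theta)$. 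For the closure rules, the case of term constructors is immediate since $\mor$ commutes with every constructor by definition and conversion is itself closed under term constructors; the case of closure under substitutions follows again from the Substitution Lemma together with the fact that $\equiv_{\beta(\eta)\R}$ is closed under substitutions.

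Finally, lifting from $\lra_{\beta(\eta)\R}$ to $\equiv_{\beta(\eta)\R}$ is routine: reflexivity, symmetry, and transitivity of $\equiv_{\beta(\eta)\R}$ on the target side turn a chain of one-step reductions in $\S$ into the desired conversion in $\T$. The main conceptual hurdle is the rewrite-rule case, where one must notice that the definition of theory morphism only guarantees \emph{convertibility} of $\mor(\ell)$ and $\mor(r)$, so it is essential that the target relation we are establishing is conversion rather than reduction — in particular, we cannot hope to prove the stronger statement that $\mor$ preserves one-step reduction, which is the discrepancy that distinguishes our setting from Felicissimo's stricter condition on rewriting.
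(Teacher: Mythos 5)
Your proposal is correct and follows essentially the same route as the paper: induction over the formation of the conversion relation, with the $\beta$/$\eta$ cases and the rewrite-rule case all handled via the Substitution Lemma (\cref{prop_subst_mu}) and condition~3 of theory morphisms, and the closure rules dispatched immediately. The only cosmetic difference is that you first factor through one-step reduction and then take the reflexive-symmetric-transitive closure, whereas the paper folds those closure cases directly into the induction; the mathematical content is identical.
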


\begin{proof}
The proof proceeds by induction on the formation of $A \equiv_{\beta(\eta)\R} B$ and $K \equiv_{\beta(\eta)\R} K'$.
\begin{itemize}
\item By definition, we have $\mor((\lambda x : A. ~M) ~N) = (\lambda x : \mor(A). ~\mor(M)) ~\mor(N)$, which $\beta$-reduces to $\mor(M)\mor([x \leftarrow N])$. 
By \Cref{prop_subst_mu}, we get $\mor((\lambda x : A. ~M) ~N) \equiv_{\beta\R} \mor(M[x \leftarrow N])$. 
We prove $\mor((\lambda x : A. ~B) ~M) \equiv_{\beta\R} \mor(B[x \leftarrow M])$ similarly.
\item By definition, we have $\mor(\lambda x : A. ~M ~x) = \lambda x : \mor(A). ~\mor(M) ~x$, which $\eta$-reduces to $\mor(M)$. 
\item Let $\ell \lra r \in \S$ and $\theta$ be a substitution. 
Using \Cref{prop_subst_mu}, we have $\mor(\ell\theta) = \mor(\ell)\mor(\theta)$ and $\mor(r\theta) = \mor(r)\mor(\theta)$. 
By definition of theory morphisms, we derive $\mor(\ell\theta) = \mor(\ell)\mor(\theta) \equiv_{\beta(\eta)\R} \mor(r)\mor(\theta) = \mor(r\theta)$.
\item Closure by context, reflexivity, symmetry, and transitivity are immediate. \qedhere
\end{itemize}
\end{proof}

\begin{proof}[Proof of \Cref{thm_morphism}]
The proof proceeds by induction on the typing derivations. 
The cases \textsc{App-Obj} and \textsc{App-Type} rely on \Cref{prop_subst_mu}. 
The cases \textsc{Conv-Type} and \textsc{Conv-Kind} rely on \Cref{prop_conv_mu}. 
\end{proof}

\modif{
\begin{remark}
Theory morphisms capture that derivability in the source theory implies derivability in the target theory.
The reverse is not necessarily true and indeed often fails, e.g., if the target has greater logical and computational strength than the source.
We get back to this in \Cref{sec:conc}.
\end{remark}
}

\subsection{Applications}

Theory morphisms encompass many different translations between logics and between data structures. 
We give here three examples.
The first one is a translation from a theory with axiomatized natural deduction to a theory with computational natural deduction.
\modif{The second one is a translation from propositional logic to $Q_0$ logic.
The third one is a translation from lists to binary trees.}

\subsubsection{From Deduction to Computation}
\label{ex_deduc_comput}
\reviewernote{This example has been reworked to refer to \PL}

In \Cref{ex_impand}, we have encoded the natural deduction rules of the implication and the conjunction via typed constants. 
Alternatively, we can make use of the computational power of \lpr and represent natural deduction rules via rewrite rules~\cite{theoryU}.
\begin{flalign*}
&\Prf ~(p \imp q) \lra \Prf ~p \ra \Prf ~q & \\
&\Prf ~(p \conj q) \lra \Pi r : \Prop. ~(\Prf ~p \ra \Prf ~q \ra \Prf ~r) \ra \Prf ~r &
\end{flalign*}
We can now define a theory morphism $\mor$ from the deductive encoding to the computational encoding of the natural deduction rules, thus showing that the latter is a refinement of the former. 
The constants shared by both encodings are mapped to themselves.
The constants representing natural deduction rules are mapped to theorems proving them by using the rewrite rules.
We map the introduction of implication to 
$$\mor(\impi) = \lambda p,q : \Prop. ~\lambda H : \Prf ~p \ra \Prf ~q. ~H$$
of type $\Pi p,q : \Prop. ~(\Prf ~p \ra \Prf ~q) \ra \Prf ~(p \imp q)$, since $\Prf ~p \ra \Prf ~q$ and $\Prf ~(p \imp q)$ are convertible. 
Similarly, the left-elimination of the conjunction is mapped to
$$\mor(\andel) = \lambda p,q : \Prop. ~\lambda H_{pq} : \Prf ~(p \conj q). ~H_{pq} ~p ~(\lambda H_p : \Prf ~p. ~\lambda H_q : \Prf ~q. ~H_p)$$
which has type $\Pi p,q : \Prop. ~\Prf ~(p \conj q) \ra \Prf ~p$. 
The same idea applies for the remaining rules.


\modif{
\subsubsection{From Propositional Logic to $Q_0$ Logic}
\reviewernote{This new example replaces the example about Kuroda translation. It uses \PL and illustrates the advantages of rewriting.}

Andrews formulated $Q_0$~\cite{andrews_q0}, a version of higher-order logic in which the only primitive symbol is equality. 
All the other connectives and quantifiers are then built upon equality.
$Q_0$ is for instance used in the HOL Light proof assistant.

\begin{example}[$Q_0$ logic]
\label{ex_q0}
To encode $Q_0$ in \lpr, we have to define a polymorphic equality.
However, we cannot quantify on $\Type$. 
We define the type $\Set$ of sorts, and $\El$ that maps sorts to the type of their elements. 
In doing so, we can quantify on sorts and embed them into types.
The sort $\o$ encodes propositions.
The arrow $\arr$ encodes function sorts.
The rewrite rule on $\arr$ states that the type embedding a function sort indeed corresponds to a function type~\cite{theoryU}. 
$\Prf$ maps propositions to the type of their proofs. 
\begin{flalign*}
&\Set : \Type & &\arr : \Set \ra \Set \ra \Set & &\o : \Set \\
&\El : \Set \ra \Type & &\El ~(x \arr y) \lra \El ~x \ra \El ~y & &\Prf : \El ~\o \ra \Type
\end{flalign*}
We define a polymorphic equality that satisfies the Leibniz rule, functional extensionality and propositional extensionality. 
For simplicity, ${=} ~a ~x ~y$ is written $x =_a y$. 
Contrary to the usual presentation of $Q_0$, we take advantage of \lpr and define the Leibniz rule through rewriting.
\begin{flalign*}
&{=} : \Pi a : \Set. ~\El ~a \ra \El ~a \ra \El ~\o & \\
&\Prf ~(x =_a y) \lra \Pi P : \El ~a \ra \Prop. ~\Prf ~(P ~x) \ra \Prf ~(P ~y) & \\
&\mathsf{propext} : \Pi p,q : \Prop. ~(\Prf ~p \ra \Prf ~q) \ra (\Prf ~q \ra \Prf ~p) \ra \Prf ~(p =_{\o} q) & \\
&\mathsf{funext} : \Pi a,b : \Set. ~\Pi f,g : \El ~(a \arr b). ~(\Pi x : \El ~a. ~\Prf ~(f ~x =_b g ~x)) \ra \Prf ~(f =_{a \arr b} g) &
\end{flalign*}
In $\mathsf{funext}$, the applications $f ~x$ and $g ~x$ are well-typed thanks to the rewrite rule on $\arr$.
\end{example}

We define a theory morphism from \PL to $Q_0$. 
The constant $\Prf$ is mapped to itself, and $\Prop$ is mapped to $\El ~\o$.
Each connective is mapped to its encoding in $Q_0$.
\begin{flalign*}
&\mor(\conj) = \lambda p,q : \El ~\o. ~(\lambda f. ~f ~p ~q) =_{(\o \arr \o \arr \o) \arr \o} (\lambda f. ~f ~\top ~\top) & \\
&\mor(\imp) = \lambda p,q : \El ~\o. ~(\mor(\conj) ~p ~q) =_{\o} p &
\end{flalign*}
where $\top$ is an abbreviation for the term $(\lambda x. ~x) =_{\o \arr \o} (\lambda x. ~x)$.
Note that the translation of $\conj$ is well-typed only because of the rewrite rule on $\arr$.
The translations of the natural deduction rules rely on $\mathsf{funext}$, on $\mathsf{propext}$ and on intermediate lemmas on equality. 

In LF, i.e., without rewriting, we would have to define an application constructor $app : \Pi a,b : \Set. ~\El ~(a \arr b) \ra \El ~a \ra \El ~b$ and an abstraction constructor $\lam : \Pi a,b : \Set. ~(\El ~a \ra \El ~b) \ra \El ~(a \arr b)$ instead of the rewrite rule $\El ~(x \arr y) \lra \El ~x \ra \El ~y$, along with an axiom for $\beta$-reduction.
Similarly, the Leibniz rule would have been encoded by an axiom.
Therefore, in LF, the user has to explicitly apply these constructors and axioms.
In \lpr, this work is discharged by the framework through rewriting, resulting in much simpler proof obligations for the user.

}

\modif{
\subsection{From Lists to Binary Trees}
\label{sec_list_tree}
\begin{example}[Lists]
The theory \List defines the data structure of lists indexed by the sort of their elements. 
We use $\Set$ and $\El$ defined in \Cref{ex_q0} to define the polymorphic constructor $\lst$, which maps any sort $a$ to the type of the lists containing elements of sort $a$. 
$\nil$ creates an empty list.
$\cons$ appends an element to a list.
$\hd$ returns the head of a list and $\tl$ returns the tail of a list.
$\concat$ concatenates two lists.
\begin{flalign*}
&\lst : \Set \ra \Set & &\cons : \Pi a : \Set. ~\El ~a \ra \El ~(\lst ~a) \ra \El ~(\lst ~a) & &\hd : \Pi a : \Set. ~\El ~(\lst ~a) \ra \El ~a \\
&\nil : \Pi a : \Set. ~\El ~(\lst ~a) & &\concat : \Pi a : \Set. ~\El ~(\lst ~a) \ra \El ~(\lst ~a) \ra \El ~(\lst ~a) & &\tl: \Pi a : \Set. ~\El ~(\lst ~a) \ra \El ~(\lst ~a) 
\end{flalign*}
The semantic of these symbols is encoded using linearized rewrite rules, as explained in \Cref{leftrule}.
\begin{flalign*}
&\hd ~a ~(\cons ~a' ~x ~l) \lra x & \\
&\tl ~a ~(\cons ~a' ~x ~l) \lra l & \\
&\concat ~a ~(\nil ~a') ~l \lra l & \\
&\concat ~a ~(\cons ~a' ~x ~l_1) ~l_2 \lra \cons ~a ~x ~(\concat ~a ~l_1 ~l_2) &
\end{flalign*}
The left-hand side of these rewrite rules are obviously ill typed, but this is not a problem: the rules preserve typing.
Due to the typing constraints of the head symbols, these rewrite rules can only be used when $a$ and $a'$ are instantiated by the same sort.
\end{example}

\begin{example}[Trees]
The theory \Tree defines the data structure of binary trees indexed by the sort of their elements.
$\leaf$ creates the empty tree.
$\node$ creates a new binary tree, by taking as arguments its root and the two children. 
$\cleft$ returns the left child, $\cright$ returns the right child, and $\root$ returns the root.
\begin{flalign*}
&\tree : \Set \ra \Set & &\cleft : \Pi a : \Set. ~\El ~(\tree ~a) \ra \El ~(\tree ~a) \\
&\leaf : \Pi a : \Set. ~\El ~(\tree ~a) & &\cright : \Pi a : \Set. ~\El ~(\tree ~a) \ra \El ~(\tree ~a) \\
&\node : \Pi a : \Set. ~\El ~a \ra \El ~(\tree ~a) \ra \El ~(\tree ~a) \ra \El ~(\tree ~a) & &\root: \Pi a : \Set. ~\El ~(\tree ~a) \ra \El ~a 
\end{flalign*}
We encode the semantics of $\cleft$, $\cright$ and $\root$ using rewriting. 
Again, we use linearized rules.
\begin{flalign*}
&\cleft ~a ~(\node ~a' ~x ~l ~r) \lra l & \\
&\cright ~a ~(\node ~a' ~x ~l ~r) \lra r & \\
&\root ~a ~(\node ~a' ~x ~l ~r) \lra x & 
\end{flalign*}
We define by induction a composition of binary trees. 
The second tree is inductively composed with the left child of the first tree.
\begin{flalign*}
&\compo : \Pi a : \Set. ~\El ~(\tree ~a) \ra \El ~(\tree ~a) \ra \El ~(\tree ~a) & \\
&\compo ~a ~(\leaf ~a') ~t \lra t & \\
&\compo ~a ~(\node ~a' ~x ~l ~r) ~t \lra \node ~a ~x ~(\compo ~a ~l ~t) ~r & 
\end{flalign*}
\end{example}

We can now define a theory morphism from \List to \Tree.
The idea is to represent lists as binary trees with only left children.
For instance, the list $[x_1, x_2, x_3]$ is mapped to the binary tree
\begin{center}
\begin{tikzpicture}[xscale=0.4,yscale=0.8] 
\node(a) at (3,3)[circle,draw,text centered] {$x_1$};
\node(b) at (2,2)[circle,draw,text centered] {$x_2$};
\node(z) at (4,2) {};
\node(c) at (1,1)[circle,draw,text centered] {$x_3$};
\node(y) at (3,1) {};
\node(x) at (2,0) {};
\node(w) at (0,0) {};

\draw[->] (a) -- (b);
\draw[->] (b) -- (c);
\draw[->] (a) -- (z);
\draw[->] (b) -- (y);
\draw[->] (c) -- (x);
\draw[->] (c) -- (w);
\end{tikzpicture}
\end{center}

Formally, the morphism maps $\Set$ and $\El$ to themselves.
$\lst$, $\nil$ and $\concat$ are mapped to their counterparts in \Tree.
The parameters for $\cons$, $\hd$ and $\tl$ are chosen so that we encode lists inside the left of binary trees.
\begin{flalign*}
&\mor(\lst) = \tree & \\
&\mor(\nil) = \leaf & \\
&\mor(\cons) = \lambda a : \Set. ~\lambda x : \El ~a. ~\lambda l : \El ~(\tree ~a). ~\node ~a ~x ~l ~(\leaf ~a) & \\
&\mor(\hd) = \lambda a : \Set. ~\lambda l : \El ~(\tree ~a). ~\root ~a ~l & \\
&\mor(\tl) = \lambda a : \Set. ~\lambda l : \El ~(\tree ~a). ~\cleft ~a ~l & \\
&\mor(\concat) = \lambda a : \Set. ~\lambda l_1, l_2 : \El ~(\tree ~a). ~\compo ~a ~l_1 ~l_2 &
\end{flalign*}
The conditions on the rewrite rules of \List are satisfied in \Tree.
We only show the most interesting case.
\[
\begin{array}{lll}
\mor(\concat ~a ~(\cons ~a ~x ~l_1) ~l_2) &\equiv_{\beta\R} &\compo ~a ~(\node ~a ~x ~l_1 ~(\leaf ~a)) ~l_2 \\
	&\equiv_{\beta\R} &\node ~a ~x ~(\compo ~a ~l_1 ~l_2) ~(\leaf ~a) \\
	&\equiv_{\beta\R} &\mor(\cons ~a ~x ~(\concat ~a ~l_1 ~l_2))´
\end{array}
\]
Note that the conditions on the rewrite rules only need to be satisfied by the versions of the rules \emph{before} linearization. 
Indeed, we only use instances of the linearized rules where the left-hand side is well typed, i.e., we only use instances of the rules before linearization.

We are therefore able to translate lists into binary trees, where both data structures are encoded using the computational power of \lpr.
}




\section{Logical Relations}
\label{sec_lr}
In this section, we extend logical relations in the sense of~\cite{logical_relations} to \lpr, and we prove the main theorem about them, often called the Basic lemma or Abstraction theorem.
\modif{The technical details of logical relations are much trickier than those of theory morphisms.
But our proof of the Abstraction theorem for \lpr will be simpler than the proof for LF~\cite{logical_relations} despite the added generality.}

\subsection{Formal Definition}

A theory morphism $\mor$ maps the judgment $\Gamma \vdash_{\S} M : A$ to the judgment $\mor(\Gamma) \vdash_{\T} \mor(M) : \mor(A)$.
A logical relation $\lr$ on $\mor$ states and proves an additional invariant satisfied by $\mor$: it maps the judgment $\Gamma \vdash_{\S} M : A$ to the judgment $\lr(\Gamma) \vdash_{\T} \lr(M) : \lr(A) ~\mor(M)$.
Here every type $A$ is mapped to a predicate $\lr(A):\mor(A)\to\Type$ and every term $M:A$ is mapped to a proof $\lr(M)$ that $\mor(M)$ satisfies $\lr(A)$.

The function $\lr$ duplicates every (free or bound) variable so that every fresh variable $x:A$ yields both its translation $x:\mor(A)$ and an assumption $x^*:\lr(A)~x$ that $x$ satisfies the invariant.
Thus, the translation of an abstraction $\lambda x : A. ~M$ is $\lambda x : \mor(A). ~\lambda x^* : \lr(A) ~x. ~\lr(M)$.
Accordingly, the translation of an application $M ~N$ is $\lr(M) ~\mor(N) ~\lr(N)$, i.e., it supplies both the translated argument and the proof of its invariant.

The definition of the logical relation of a function type $\Pi x : A. ~B$ is the well-known condition that functions must preserve the relation: $\lr(\Pi x : A. ~B)$ holds for a function $f:\mor(\Pi x : A. ~B)$ if for every $x:\mor(A)$ satisfying $\lr(A)$, the term $(f ~x)$ satisfies $\lr(B)$.
Thus, $\lr(\Pi x : A. ~B)$ is given by $\lambda f : \mor(\Pi x : A. ~B). ~\Pi x : \mor(A). ~\Pi x^* : \lr(A) ~x. ~\lr(B) ~(f ~x)$.

Following the same idea, we would like to define $\lr(\Pi x : A. ~K) = \lambda f : \mor(\Pi x : A. ~K). ~\Pi x : \mor(A). ~\Pi x^* : \lr(A) ~x. ~\lr(K) ~(f ~x)$.
This works, with a little extra effort, in type theories with higher universes.
However, in \lpr, such a term is ill typed because $\mor(\Pi x : A. ~K)$ is a kind.
To get around this issue, we insert an extra parameter $R$ to the translation: \modif{if $R$ has type $\Pi x : A. ~K$ then} we define $\lr^{R}(\Pi x : A. ~K) = \Pi x : \mor(A). ~\Pi x^* : \lr(A) ~x. ~\lr^{R ~x}(K)$, and \modif{if $R$ has type $\Type$ then we define} $\lr^{R}(\Type) = \mor(R) \ra \Type$.

More generally, we can define $n$-ary logical relations for theory morphisms $\mor_1, \ldots, \mor_n$ from $\S$ to $\T$. 
Such a $n$-ary logical relation $\lr$ maps every type to an $n$-ary predicate and every term $M:A$ to a proof of $\lr(A)~\mor_1(M)~\ldots~\mor_n(M)$.

\paragraph*{Conventions} Let $\mor_1, \ldots, \mor_n$ be $n$ theory morphisms from $\S$ to $\T$. 
Without loss of generality, we consider that each $\mor_i$ maps variables $x$ to $x_i$. 
We use the following notations:
\begin{itemize}
\item We write $\vec{x} : \vec{\mor}(A)$ for the context $x_1 : \mor_1(A), \ldots, x_n : \mor_n(A)$.
\item We write $[\vec{x} \leftarrow \vec{\mor}(M)]$ for the substitution $[x_1 \leftarrow \mor_1(M), \ldots, x_n \leftarrow \mor_n(M)]$.
\item We write $\lambda \vec{x} : \vec{\mor}(A). ~t$ for $\lambda x_1 : \mor_1(A). ~\ldots ~\lambda x_n : \mor_n(A). ~t$.
\item We write $\Pi \vec{x} : \vec{\mor}(A). ~t$ for $\Pi x_1 : \mor_1(A). ~\ldots ~\Pi x_n : \mor_n(A). ~t$. Similarly, we write $\vec{\mor}(A) \ra t$ for $\mor_1(A) \ra \ldots \ra \mor_n(A) \ra t$.
\item Given a list of terms $\vec{M} = M_1, \ldots, M_n$, we write $t ~\vec{M}$ for the application $t ~M_1 ~\ldots ~M_n$.
\end{itemize}

\begin{definition}[Logical relation]
Let $\mor_1, \ldots, \mor_n$ be theory morphisms from $\S$ to $\T$. 
The mapping $\lr$ defined inductively from a set of parameters $\lr_c$ and $\lr_a$ by
\[
\begin{array}{lll}
\lr(x) &= &x^* \\
\lr(c) &= &\lr_c \\
\lr(a) &= &\lr_a \\
\lr(M ~N) &= &\lr(M) ~\vec{\mor}(N) ~\lr(N) \\
\lr(A ~M) &= &\lr(A) ~\vec{\mor}(M) ~\lr(M) \\
\lr(\lambda x : A. ~M) &= &\lambda \vec{x} : \vec{\mor}(A). ~\lambda x^* : \lr(A) ~\vec{x}. ~\lr(M) \\
\lr(\lambda x : A. ~B) &= &\lambda \vec{x} : \vec{\mor}(A). ~\lambda x^* : \lr(A) ~\vec{x}. ~\lr(B) \\
\lr(\Pi x : A. ~B) &= &\lambda \vec{f} : \vec{\mor}(\Pi x : A. ~B). ~\Pi \vec{x} : \vec{\mor(A)}. ~\Pi x^* : \lr(A) ~\vec{x}. ~\lr(B) ~(f_1 ~x_1) ~\ldots ~(f_n ~x_n) \\
\lr^{R}(\Pi x : A. ~K) &= &\Pi \vec{x} : \vec{\mor}(A). ~\Pi x^* : \lr(A) ~\vec{x}. ~\lr^{R ~x}(K) \\
\lr^{R}(\Type) &= &\vec{\mor}(R) \ra \Type \\
\lr(\Kind) &= &\Kind
\end{array}
\]
is a logical relation on $\mu$ when:
\begin{enumerate}
\item for every constant $c : A \in \S$, there exists a term $\lr_c$ such that $\vdash_\T \lr_c : \lr(A) ~\vec{\mor}(c)$,
\item for every constant $a : K \in \S$, there exists a term $\lr_a$ such that $\vdash_\T \lr_a : \lr^{a}(K)$,
\item for every rewrite rule $\ell \lra r \in \S$, we have $\lr(\ell) \equiv_{\beta(\eta)\R} \lr(r)$,
\end{enumerate}
where $\lr$ is defined on contexts and substitutions by
\[
\begin{array}{lll}
\lr(\varnothing) &= &\varnothing \\
\lr(\Gamma, x : A) &= &\lr(\Gamma),\; \vec{x} : \vec{\mor}(A),\; x^* : \lr(A) ~\vec{x} \\
\lr(\theta, x \leftarrow N) &= &\lr(\theta),\; \vec{x} \leftarrow \vec{\mor}(N),\; x^* \leftarrow \lr(N).
\end{array}
\]
\end{definition}

The first two conditions are the same as in LF.
The third condition, specific to \lpr, is necessary so that convertibility is preserved by logical relations.

Note that for every variable $x$ that occurs in a term $t$, the two variables $x_i$ and $x^*$ occur in the translated term $\lr_i(t)$.
Therefore, if $\Gamma$ declares $m$ variables, $\lr(\Gamma)$ declares $m(n+1)$ variables.
\medskip

\modif{
In \Cref{ex:muldiviso}, we discussed how the equality between $\MDGr;\DMGr$ and the identity of \MDGr can be offloaded to the definitional equality of \lpr extended with $\eta$-reduction. 
In general, however, definitional equality---even with $\eta$-reduction---is not sufficient to show the equality between two morphisms, and we have to resort to a propositional equality.
Those situations are one of the applications of logical relations:}

\begin{example}\label{ex:muldivrel}
\modif{In \Cref{ex:muldiviso}, we showed that $\MDGr;\DMGr$ and the identify of \MulGr are definitionally equal in the \lpr meta-logic if we use $\eta$.
But for the sake of example, we now show how to prove propositional equality up to object-logic equality on terms and an object-logic encoded equivalence on propositions.}
To formalize that argument, we capture the invariant as a binary logical relation on $\MDGr;\DMGr$ and on the identity of \MulGr.

We first define a binary logical relation on \PLEq that captures our proof obligations: the two translations of any element of type $\i$ must be equal, and the two translations of any proposition must be equivalent.
\begin{flalign*}
&\lr(\i) = \lambda x_1,x_2 : \i. ~\Prf ~(x_1 = x_2) & \\
&\lr(\Prop) = \lambda p_1,p_2 : \Prop. ~\Prf ~((p_1 \imp p_2) \conj (p_2 \imp p_1)) & 
\end{flalign*}
We also have to define an invariant for proofs.
\modif{Without proof-irrelevance, our logical relation must use a trivially true relation on proof terms.}
That is inessential if we assume proof-irrelevance: then we can simply skip the proof obligations for proofs.
\begin{flalign*}
&\lr(\Prf) = \lambda p_1,p_2 : \Prop. ~\lambda H : \Prf ((p_1 \imp p_2) \conj (p_2 \imp p_1)). ~\unit & 
\end{flalign*}
Defining $\lr$ for the remaining constants is straightforward.

We extend $\lr$ to the constants of \MulGr as follows.
The parameter $\lr(\mult)$ is a proof that if $x_1 = x_2$ and $y_1 = y_2$ then $x_1 \mult y_1 = x_2 \mult y_2$.
\begin{flalign*}
\lr(\mult) = ~&\lambda x_1,x_2 : \i. ~\lambda H_x : \Prf ~(x_1 = x_2). ~\lambda y_1,y_2 : \i. ~\lambda H_y : \Prf ~(y_1 = y_2). & \\ 
&\leib ~x_1 ~x_2 ~H_x ~(\lambda z. ~x_1 \mult y_1 = z \mult y_2) ~(\leib ~y_1 ~y_2 ~H_y ~(\lambda z. ~x_1 \mult y_1 = x_1 \mult z) ~(\refl ~(x_1 \mult y_1))) & 
\end{flalign*}
The parameter $\lr(1)$ is a proof that $1 = 1$, and the parameter $\lr(\inv)$ is a proof that if $x_1 = x_2$ then $\inv ~x_1 = \inv ~x_2$.
\begin{flalign*}
&\lr(1) = \refl ~1 & \\
&\lr(\inv) = \lambda x_1,x_2 : \i. ~\lambda H : \Prf ~(x_1 = x_2). ~\leib ~x_1 ~x_2 ~H ~(\lambda z : \i. ~\inv ~x_1 = \inv ~z) ~(\refl ~(\inv ~x_1)) &
\end{flalign*}
We can easily express the remaining parameters.

To ensure this is a well-formed logical relation, we have to check the condition on the rewrite rules.
For every rewrite rule $\ell \lra r$ of \MulGr, $\ell$ and $r$ have type $\i$.
Therefore $\lr(\ell)$ and $\lr(r)$ are proofs, so that these conditions are trivial under proof irrelevance.
\end{example}

\subsection{Abstraction Theorem}

We have seen that, if $M$ has type $A$, we want $\lr(M)$ to be of type $\lr(A) ~\mu(M)$, where $\lr(A)$ is intuitively an invariant that must be satisfied by $\mu(M)$. 
The Abstraction theorem extends this idea to the three-level hierarchy of LF and \lpr.

\begin{theorem}[Abstraction]
\label{thm_logical_relation}
Let $\lr$ be a logical relation on $\mor_1, \ldots, \mor_n$.
\begin{enumerate}
\item If $\vdash_\S \Gamma$, then $\vdash_\T \lr(\Gamma)$.
\item If $\Gamma \vdash_\S M : A$ and $\Gamma \vdash_\S A : \Type$, then $\lr(\Gamma) \vdash_\T \lr(M) : \lr(A) ~\vec{\mor}(M)$.
\item If $\Gamma \vdash_\S A : K$ and $\Gamma \vdash_\S K : \Kind$, then $\lr(\Gamma) \vdash_\T \lr(A) : \lr^{A}(K)$.
\item If $\ \vdash_\S K : \Kind$ and $\Gamma \vdash_\S A : K$, then $\lr(\Gamma) \vdash_\T \lr^{A}(K) : \Kind$.
\end{enumerate}
\end{theorem}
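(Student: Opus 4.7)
The plan is to mimic the proof structure of \Cref{thm_morphism}, proceeding by induction on typing derivations, after first establishing the two auxiliary properties on which it rests: a substitution lemma for $\lr$ and a conversion lemma for $\lr$ (both for $\lr$ and for $\lr^R$). Because conversion in \lpr is untyped (see \Cref{rem:conv}), I can prove the conversion lemma independently, before doing any induction on typing, which is exactly where the simplification over~\cite{logical_relations} comes from.

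First I would prove the substitution lemma: for any substitution $\theta$, we have $\lr(t\theta) = \lr(t)\,\lr(\theta)$, and similarly $\lr^{R\theta}(T\theta) = \lr^R(T)\,\lr(\theta)$ for types and kinds. The proof is by induction on $t$ (respectively $T$), using \Cref{prop_subst_mu} applied to each $\mor_i$ to handle the $\vec{\mor}(N)$ components that appear, e.g., in the application clause $\lr(M~N) = \lr(M)\,\vec{\mor}(N)\,\lr(N)$. The variable case $\lr(x) = x^*$ requires care: when $\theta$ binds $x$ to $N$, the extended substitution $\lr(\theta)$ sends $x^*$ to $\lr(N)$ while sending each $x_i$ to $\mor_i(N)$, so the three pieces fit together. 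Then I would prove the conversion lemma: if $t \equiv_{\beta(\eta)\R} u$ then $\lr(t) \equiv_{\beta(\eta)\R} \lr(u)$. The $\beta$-case uses the substitution lemma, the $\eta$-case uses that $\lr$ of an abstraction is an abstraction in each of the $n+1$ fresh variables so successive $\eta$-reductions collapse it, and the rewrite-rule case uses the third condition in the definition of logical relations.

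With these in hand, I would attack the four statements jointly by induction on the typing derivations, just as in the proof of \Cref{thm_morphism}. The context well-formedness case follows from item (3), as $\lr(\Gamma,x:A) = \lr(\Gamma), \vec{x}:\vec{\mor}(A), x^*:\lr(A)\,\vec{x}$ is well-formed exactly when $\lr(A)\,\vec{x}$ is a type in the extended context, which item (3) supplies via the judgment $\lr(\Gamma)\vdash_\T \lr(A):\lr^A(\Type) = \vec{\mor}(A)\to\Type$. The variable and constant cases are immediate from the context declarations and from the defining conditions on $\lr_c$, $\lr_a$ respectively (applying \Cref{thm_morphism} to get the $\mor_i$ components typed correctly). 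Application uses the substitution lemma, and the two conversion rules use the conversion lemma. The abstraction case for objects requires a small computation: showing that $\lambda \vec{x}.~\lambda x^*.~\lr(M)$ has type $\lr(\Pi x:A.~B)\,\vec{\mor}(\lambda x:A.~M)$ amounts to a $\beta$-reduction of the head $\lambda \vec{f}$ in $\lr(\Pi x:A.~B)$ followed by a $\beta$-reduction of $(\mor_i(\lambda x:A.~M))\,x_i = (\lambda x:\mor_i(A).\,\mor_i(M))\,x_i$, both handled by the \textsc{Conv-Type} rule.

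The main obstacle, I expect, is the careful bookkeeping of the parameter $R$ in $\lr^R$, especially in the application and conversion cases at the type level. Specifically, in the \textsc{App-Type} case we have $\Gamma\vdash_\S A~M:K[x\leftarrow M]$ and must produce a term of type $\lr^{A~M}(K[x\leftarrow M])$, which by the substitution lemma equals $\lr^{R~x}(K)[R \leftarrow A][\vec{x}\leftarrow\vec{\mor}(M), x^*\leftarrow \lr(M)]$; checking that the inductive hypothesis for $A$, which gives $\lr(A) : \lr^A(\Pi x:B.~K)$, composes properly with the one for $M$ requires unfolding the definition $\lr^A(\Pi x:B.~K) = \Pi\vec{x}:\vec{\mor}(B).\,\Pi x^*:\lr(B)\,\vec{x}.\,\lr^{A~x}(K)$ and then substituting. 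Similarly, the \textsc{Abs-Type} case needs the $\lambda x:A.~B$ clause of the $\lr$ definition to agree, under $\beta$-conversion, with what the \textsc{Prod-Kind} parent requires. Once the substitution and conversion lemmas are in place these are routine but tedious equalities, and they are the main thing that has to be grinded through.
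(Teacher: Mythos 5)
Your proposal is correct and follows essentially the same route as the paper: a substitution lemma for $\lr$ (including the $\lr^{R\theta}(K\theta)=\lr^R(K)\lr(\theta)$ form), a separately provable conversion lemma exploiting untyped conversion, and then an induction on typing derivations that invokes \Cref{thm_morphism} for the $\mor_i$ components, the substitution lemma in the application cases, and the conversion lemma in the conversion cases. The points you flag as the main tedium (the $\beta$-conversion in \textsc{Abs-Obj}, and the bookkeeping of the superscript in \textsc{Abs-Type} and \textsc{App-Type}) are exactly where the paper's proof also does its only nontrivial work.
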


\modif{Note that, in~\cite{logical_relations}, the preservation of conversion is part of the Abstraction theorem. 
This is because in~\cite{logical_relations}, the conversion is typed, so that both conversion and typing depend on each other and both proofs must be intertwined.
With our untyped conversion, the preservation of conversion can be proved separately and before proving the preservation of typing, making the proofs much simpler.
Our proof needs only a simple substitution lemma and a conversion lemma.}

\begin{lemma}[Substitution]
\label{prop_subst_lr}
Let $\lr$ be a logical relation on $\mor_1, \ldots, \mor_n$, and $\theta$ be a substitution. 
Then, \modif{up to $\alpha$-renaming}, we have:
\begin{enumerate}
\item $\lr(M\theta) = \lr(M)\lr(\theta)$,
\item $\lr(A\theta) = \lr(A)\lr(\theta)$,
\item $\lr^{A\theta}(K\theta) = \lr^{A}(K)\lr(\theta)$.
\end{enumerate}
\end{lemma}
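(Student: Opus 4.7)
The plan is to proceed by mutual induction on the structure of $M$, $A$, and $K$, parallel to the proof of \Cref{prop_subst_mu}, using that lemma as a black box wherever $\vec\mor$-translations appear inside $\lr$-translations. Throughout, the key observation is that $\lr(\theta)$ decomposes into an $\vec{x}$-part (the substitutions $x_i \leftarrow \mor_i(N)$, which coincides with $\vec{\mor}(\theta)$) and a starred part (the substitutions $x^* \leftarrow \lr(N)$); so whenever a subterm of $\lr(t)$ has only $\vec{x}$-variables free (i.e.\ comes from a $\vec\mor$-translation), applying $\lr(\theta)$ to it agrees with applying $\vec{\mor}(\theta)$ and \Cref{prop_subst_mu} fires.

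First I would dispatch the base cases. For a variable $x$, split on whether $\theta$ substitutes $x$: if it does by $N$, then by construction $\lr(\theta)$ substitutes $x^*$ by $\lr(N)$, so $\lr(x\theta)=\lr(N)=x^*\lr(\theta)=\lr(x)\lr(\theta)$; otherwise both sides equal $x^*$. For a constant $c$ (respectively $a$), the parameter $\lr_c$ (respectively $\lr_a$) is closed because it is typed in the empty context, so $\lr(c\theta)=\lr_c=\lr_c\lr(\theta)=\lr(c)\lr(\theta)$. The $\Kind$ case is trivial.

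For the inductive cases on applications $M\,N$ and $A\,M$, unfold the definitions of $\lr$, apply the induction hypothesis to each sub-occurrence, and invoke \Cref{prop_subst_mu} on the $\vec{\mor}(N)$ and $\vec{\mor}(M)$ factors, using the observation above to replace $\vec{\mor}(\cdot)\vec{\mor}(\theta)$ by $\vec{\mor}(\cdot)\lr(\theta)$. For binders ($\lambda x : A.\,M$, $\lambda x : A.\,B$, $\Pi x : A.\,B$, $\Pi x : A.\,K$), $\alpha$-rename so that $x$ is fresh with respect to $\theta$ and the codomain of $\theta$; then $\theta$ commutes with the binder in $\S$, and $\lr(\theta)$ commutes with the corresponding $\vec{x}$-and-$x^*$ binder block in $\T$, after which the IH closes the case.

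The slightly delicate case is $\lr^{R}(\Pi x:A.\,K)$, because the index $R$ must co-move with $\theta$. Here I would compute, assuming $x$ fresh,
\[
\lr^{R}(\Pi x:A.\,K)\,\lr(\theta) \;=\; \Pi\vec{x}\!:\!\vec{\mor}(A)\lr(\theta).\ \Pi x^*\!:\!\lr(A)\lr(\theta)\,\vec{x}.\ \lr^{R\,x}(K)\,\lr(\theta),
\]
then use \Cref{prop_subst_mu} on $\vec{\mor}(A)$, the IH (item~2) on $\lr(A)$, and the IH (item~3) on $\lr^{R\,x}(K)$ with the substitution $\theta$ unchanged (since $x$ is fresh), noting that $(R\,x)\theta = (R\theta)\,x$, to match the right-hand side $\lr^{R\theta}(\Pi x : A\theta.\,K\theta)$. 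The case $\lr^{R}(\Type) = \vec{\mor}(R)\ra\Type$ reduces immediately to \Cref{prop_subst_mu}. I expect the main obstacle to be purely bookkeeping: keeping the triple of variables $x_1,\dots,x_n,x^*$ fresh during $\alpha$-renaming and tracking the index $R$ across substitutions; there is no genuinely new idea beyond those already used in the proof of \Cref{prop_subst_mu}.
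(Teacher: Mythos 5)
Your proposal is correct and takes essentially the same approach as the paper, whose entire proof is the one-liner ``By induction on the terms $M$, $A$ and $K$''; your write-up is a faithful expansion of exactly that induction, correctly handling the two points the paper leaves implicit (that $\lr(\theta)$ restricted to the $x_i$-variables agrees with $\vec{\mor}(\theta)$ so that \cref{prop_subst_mu} applies to the embedded $\vec{\mor}$-translations, and that the index $R$ becomes $R\,x$ under the binder while the induction remains on the structure of $K$).
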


\begin{proof}
By induction on the terms $M$, $A$ and $K$.
\end{proof}

\begin{lemma}[Conversion]
\label{prop_conv_lr}
Let $\lr$ be a logical relation on $\mor_1, \ldots, \mor_n$.
\begin{enumerate}
\item If $A \equiv_{\beta(\eta)\R} B$ in $\S$, then $\lr(A) \equiv_{\beta(\eta)\R} \lr(B)$ in $\T$.
\item If $K \equiv_{\beta(\eta)\R} K'$ in $\S$ then $\lr^{R}(K) \equiv_{\beta(\eta)\R} \lr^{R}(K')$ in $\T$.
\end{enumerate}
\end{lemma}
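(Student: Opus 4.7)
The plan is to mirror the proof of \Cref{prop_conv_mu}, proceeding by induction on the derivation of $A \equiv_{\beta(\eta)\R} B$ for item~1 and on that of $K \equiv_{\beta(\eta)\R} K'$ for item~2. There are three substantive cases---a $\beta$-step, an $\eta$-step (only in the variant with $\eta$), and a rewrite-rule step---while closure under term constructors and substitutions, reflexivity, symmetry, and transitivity follow directly from the induction hypothesis. The closure-at-a-$\Pi$-binder case also invokes \Cref{prop_conv_mu} so that the binder annotations $\vec{\mor}(A)$ on the two sides match.

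For a $\beta$-step $(\lambda x : A.\, B)\, M \lra B[x \leftarrow M]$, unfolding definitions yields $\lr((\lambda x : A.\, B)\, M) = (\lambda \vec{x} : \vec{\mor}(A).\, \lambda x^* : \lr(A)\, \vec{x}.\, \lr(B))\, \vec{\mor}(M)\, \lr(M)$, which $\beta$-reduces in $n+1$ steps to $\lr(B)\,[\vec{x} \leftarrow \vec{\mor}(M),\, x^* \leftarrow \lr(M)] = \lr(B)\,\lr([x \leftarrow M])$; by \Cref{prop_subst_lr} this equals $\lr(B[x \leftarrow M])$, as required. The object-level $\beta$-step is treated identically. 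For an $\eta$-step $\lambda x : A.\, M\, x \lra M$, one computes $\lr(\lambda x : A.\, M\, x) = \lambda \vec{x} : \vec{\mor}(A).\, \lambda x^* : \lr(A)\, \vec{x}.\, \lr(M)\, \vec{x}\, x^*$; since $x$ is not free in $M$, the $n+1$ variables $x_1, \ldots, x_n, x^*$ are not free in $\lr(M)$, and $n+1$ successive $\eta$-contractions recover $\lr(M)$. For a rewrite instance $\ell\theta \lra r\theta$, \Cref{prop_subst_lr} gives $\lr(\ell\theta) = \lr(\ell)\,\lr(\theta)$ and $\lr(r\theta) = \lr(r)\,\lr(\theta)$; the third clause in the definition of a logical relation provides $\lr(\ell) \equiv_{\beta(\eta)\R} \lr(r)$, which lifts under the substitution $\lr(\theta)$ to the required conversion.

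Item~2 follows the same skeleton. Since kinds are generated by $\Type$ and $\Pi x : A.\, K$, any $K \equiv_{\beta(\eta)\R} K'$ decomposes into conversions of embedded type annotations (handled by item~1 together with \Cref{prop_conv_mu}) and of proper subkinds (handled by the induction hypothesis), which the definitional clause $\lr^R(\Pi x : A.\, K) = \Pi \vec{x} : \vec{\mor}(A).\, \Pi x^* : \lr(A)\, \vec{x}.\, \lr^{R\, x}(K)$ propagates cleanly; note that $R$ is a spectator in this argument and need not satisfy any side condition for the conversion to go through. The main delicate point---and the step most likely to need care---is the $\eta$ case, where one must verify that the transformation $\lr(M\, x) = \lr(M)\, \vec{x}\, x^*$ exposes exactly the $n+1$ abstractions produced by $\lr$ on $\lambda x$, and that freshness of $x$ in $M$ transfers to freshness of $x_1, \ldots, x_n, x^*$ in $\lr(M)$ via the $\alpha$-renaming convention from \Cref{def:conv}. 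Everything else is a direct adaptation of the morphism proof, kept simple by the untyped-conversion setting of \Cref{rem:conv}.
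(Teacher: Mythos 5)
Your proposal is correct and follows essentially the same route as the paper: induction on the formation of the conversion, with the $\beta$-redex case discharged by unfolding $\lr$ and appealing to \Cref{prop_subst_lr}, the $\eta$ case by the $n+1$ contractions collapsing $\lambda \vec{x}.\lambda x^*.\,\lr(M)\,\vec{x}\,x^*$ back to $\lr(M)$, the rewrite case by the substitution lemma plus the third condition on logical relations, and the closure cases handled via the induction hypothesis together with \Cref{prop_conv_mu} for the binder annotations. Your added remarks on freshness in the $\eta$ case and on $R$ being inert in item~2 are accurate elaborations of points the paper leaves implicit.
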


\begin{proof}
The proof proceeds by induction on the formation of $A \equiv_{\beta(\eta)\R} B$ and $K \equiv_{\beta(\eta)\R} K'$.
\begin{itemize}
\item We have $\lr((\lambda x : A. ~M) ~N) = (\lambda \vec{x} : \vec{\mor}(A). ~\lambda x^* : \lr(A) ~\vec{x}. ~\lr(M)) ~\vec{\mor}(N) ~\lr(N)$, which $\beta$-reduces to $\lr(M)\lr([x \leftarrow N])$. 
By \Cref{prop_subst_lr}, we derive $\lr((\lambda x : A. ~M) ~N) \equiv_{\beta\R} \lr(M[x \leftarrow N])$. 
Similarly, we have $\lr((\lambda x : A. ~B) ~M) \equiv_{\beta\R} \lr(B[x \leftarrow M])$.
\item We have $\lr(\lambda x : A. ~M ~x) = \lambda \vec{x} : \vec{\mor}(A). ~\lambda x^* : \lr(A) ~\vec{x}. ~\lr(M) ~\vec{x} ~x^*$, which $\eta$-reduces to $\lr(M)$. 
\item Let $\ell \lra r \in \S$ and $\theta$ be a substitution. 
Using \Cref{prop_subst_lr}, we have $\lr(\ell\theta) = \lr(\ell)\lr(\theta)$ and $\lr(r\theta) = \lr(r)\lr(\theta)$. 
By definition, we derive $\lr(\ell\theta) = \lr(\ell)\lr(\theta) \equiv_{\beta(\eta)\R} \lr(r)\lr(\theta) = \lr(r\theta)$.
\item Closure by context, reflexivity, symmetry, and transitivity are immediate and relies on \Cref{prop_conv_mu}. \qedhere
\end{itemize}
\end{proof}

\begin{proof}[Proof of \Cref{thm_logical_relation}]
We proceed by induction on the derivations.
\begin{itemize}
\item \underline{\textsc{Empty}}: Since $\lr(\varnothing) = \varnothing$, we derive $\vdash_\T \lr(\varnothing)$ using \textsc{Empty}.

\item \underline{\textsc{Decl}}: 
By induction, we have $\vdash_\T \lr(\Gamma)$ and $\lr(\Gamma) \vdash_\T \lr(A) : \vec{\mor}(A) \ra \Type$. 
Using \Cref{thm_morphism}, we have $\mor_i(\Gamma) \vdash_\T \mor_i(A) : \Type$.
Since $x \notin \Gamma$, we have $x_i \notin \lr(\Gamma)$ and $x^* \notin \lr(\Gamma)$. 
We derive $\vdash_\T \lr(\Gamma), \vec{x} : \vec{\mor}(A), x^* : \lr(A) ~\vec{x}$ using weakening and \textsc{Decl} several times.

\item \underline{\textsc{Sort}}: 
Suppose that $\Gamma \vdash_\S A : \Type$ for some $A$. 
We have $\vdash_\T \lr(\Gamma)$ by induction hypothesis. 
Using \Cref{thm_morphism}, we get $\mor_i(\Gamma) \vdash_\T \mor_i(A) : \Type$. 
Using weakening and \textsc{Prod-Kind} several times, we derive $\lr(\Gamma) \vdash_\T \vec{\mor}(A) \ra \Type : \Kind$.

\item \underline{\textsc{Const-Obj}}: 
We get $\vdash_\T \lr(\Gamma)$ by induction hypothesis. 
We know that $\vdash_\T \lr_c : \lr(A) ~\vec{\mor}(c)$. 
Using weakening, we derive $\lr(\Gamma) \vdash_\T \lr_c : \lr(A) ~\mor_1(c) ~\ldots ~\mor_n(c)$.

\item \underline{\textsc{Const-Type}}: 
We get $\vdash_\T \lr(\Gamma)$ by induction hypothesis. 
We know that $\vdash_\T \lr_a : \lr^{a}(K)$. 
Using weakening, we derive $\lr(\Gamma) \vdash_\T \lr_a : \lr^{a}(K)$.

\item \underline{\textsc{Var}}: 
By induction, we have $\vdash_\T \lr(\Gamma)$. 
Since $x : A \in \Gamma$, we have $x^* : \lr(A) ~\vec{x} \in \lr(\Gamma)$. 
We derive $\lr(\Gamma) \vdash_\T x^* : \lr(A) ~\vec{x}$ using \textsc{Var}.

\item \underline{\textsc{Prod-Type}}: 
By induction, we have 
\[
\begin{array}{ll}
& \lr(\Gamma) \vdash_\T \lr(A) : \vec{\mor}(A) \ra \Type \\
\text{and} & \lr(\Gamma), \vec{x} : \vec{\mor}(A), x^* : \lr(A) ~\vec{x} \vdash_\T \lr(B) : \vec{\mor}(B) \ra \Type.
\end{array} 
\]
Using weakening and then \textsc{App-Obj} and \textsc{App-Type} several times, we get 
\[
\lr(\Gamma), \vec{f} : \vec{\mor}(\Pi x : A. ~B), \vec{x} : \vec{\mor}(A), x^* : \lr(A) ~\vec{x} \vdash_\T \lr(B) ~(f_1 ~x_1) ~\ldots ~(f_n ~x_n) : \Type.
\]
Using \textsc{Prod-Type} and \textsc{Abs-Type} several times, we derive 
\[
\begin{array}{ll}
\lr(\Gamma) \vdash_\T &\lambda \vec{f} : \vec{\mor}(\Pi x : A. ~B). ~\Pi \vec{x} : \vec{\mor}(A). ~\Pi x^* : \lr(A) ~\vec{x}. \\
&\lr(B) ~(f_1 ~x_1) ~\ldots ~(f_n ~x_n) : \vec{\mor}(\Pi x : A. ~B) \ra \Type.
\end{array}
\]

\item \underline{\textsc{Prod-Kind}}: 
Suppose that $\Gamma \vdash_\S R : \Pi x : A. ~K$ for some $R$. 
By induction, we have 
\[
\begin{array}{ll}
& \lr(\Gamma) \vdash_\T \lr(A) : \vec{\mor}(A) \ra \Type \\
\text{and} & \lr(\Gamma), \vec{x} : \vec{\mor}(A), x^* : \lr(A) ~\vec{x} \vdash_\T \lr^{R ~x}(K) : \Kind.
\end{array} 
\] 
Using \textsc{Prod-Kind} several times, we derive $\lr(\Gamma) \vdash_\T \Pi \vec{x} : \vec{\mor}(A). ~\Pi x^* : \lr(A) ~\vec{x}. ~\lr^{R ~x}(K) : \Kind$.

\item \underline{\textsc{Abs-Obj}}: 
By induction, we have 
\[
\begin{array}{ll}
& \lr(\Gamma) \vdash_\T \lr(A) : \vec{\mor}(A) \ra \Type \\
\text{and} & \lr(\Gamma), \vec{x} : \vec{\mor}(A), x^* : \lr(A) ~\vec{x} \vdash_\T \lr(B) : \vec{\mor}(B) \ra \Type \\
\text{and} & \lr(\Gamma), \vec{x} : \vec{\mor}(A), x^* : \lr(A) ~\vec{x} \vdash_\T \lr(M) : \lr(B) ~\vec{\mor}(M).
\end{array}
\]
Using \textsc{Abs-Obj} several times, we derive 
\[
\lr(\Gamma) \vdash_\T \lambda \vec{x} : \vec{\mor}(A). ~\lambda x^* : \lr(A) ~\vec{x}. ~\lr(M) : \Pi \vec{x} : \vec{\mor}(A). ~\Pi x^* : \lr(A) ~\vec{x}. ~\lr(B) ~\vec{\mor}(M).
\]
Using \textsc{Conv-Type}, we get $\lr(\Gamma) \vdash_\T \lr(\lambda x : A. ~M) : \lr(\Pi x : A. ~B) ~\vec{\mor}(\lambda x : A. ~M)$.

\item \underline{\textsc{Abs-Type}}: 
By induction, we have 
\[
\begin{array}{ll}
& \lr(\Gamma) \vdash_\T \lr(A) : \vec{\mor}(A) \ra \Type \\
\text{and} & \lr(\Gamma), \vec{x} : \vec{\mor}(A), x^* : \lr(A) ~\vec{x} \vdash_\T \lr^{B}(K) : \Kind \\
\text{and} & \lr(\Gamma), \vec{x} : \vec{\mor}(A), x^* : \lr(A) ~\vec{x} \vdash_\T \lr(B) : \lr^{B}(K).
\end{array}
\]
We derive $\lr(\Gamma) \vdash_\T \lambda \vec{x} : \vec{\mor}(A). ~\lambda x^* : \lr(A) ~\vec{x}. ~\lr(B) : \Pi \vec{x} : \vec{\mor}(A). ~\Pi x^* : \lr(A) ~\vec{x}. ~\lr^{B}(K)$ using \textsc{Abs-Type} several times. 
Using \textsc{Conv-Kind} and the fact that $\lr^{(\lambda x : A. ~B) ~x}(K) \equiv_{\beta\R} \lr^{B}(K)$, we get $\lr(\Gamma) \vdash_\T \lr(\lambda x : A. ~B) : \lr^{\lambda x : A. ~B}(\Pi x : A. ~K)$.

\item \underline{\textsc{App-Obj}}: 
By induction, we have 
\[
\begin{array}{ll}
& \lr(\Gamma) \vdash_\T \lr(M) : \Pi \vec{x} : \vec{\mor}(A). ~\Pi x^* : \lr(A) ~\vec{x}. ~\lr(B) ~(\mor_1(M) ~x_1) ~\ldots ~(\mor_n(M) ~x_n) \\ \text{and} & \lr(\Gamma) \vdash_\T \lr(N) : \lr(A) ~\vec{\mor}(N).
\end{array}
\]
Using \Cref{thm_morphism} and weakening, we get $\lr(\Gamma) \vdash_\T \mor_i(N) : \mor_i(A)$. 
Using \textsc{App-Obj} several times, we derive 
\[
\lr(\Gamma) \vdash_\T \lr(M) ~\vec{\mor}(N) ~\lr(N) : (\lr(B) ~(\mor_1(M) ~x_1) ~\ldots ~(\mor_n(M) ~x_n))[\vec{x} \leftarrow \vec{\mor}(N), x^* \leftarrow \lr(N)],
\]
that is 
\[
\lr(\Gamma) \vdash_\T \lr(M) ~\vec{\mor}(N) ~\lr(N) : \lr(B)[\vec{x} \leftarrow \vec{\mor}(N), x^* \leftarrow \lr(N)] ~(\mor_1(M) ~\mor_1(N)) ~\ldots ~(\mor_n(M) ~\mor_n(N)).
\]
Using \Cref{prop_subst_lr}, we derive $\lr(\Gamma) \vdash_\T \lr(M ~N) : \lr(B[x \leftarrow N]) ~\vec{\mor}(M ~N)$.

\item \underline{\textsc{App-Type}}: 
By induction, we have $\lr(\Gamma) \vdash_\T \lr(A) : \Pi \vec{x} : \vec{\mor}(B). ~\Pi x^* : \lr(B) ~\vec{x}. ~\lr^{A ~x}(K)$ 
and $\lr(\Gamma) \vdash_\T \lr(M) : \lr(B) ~\vec{\mor}(M)$. 
Using \Cref{thm_morphism} and weakening, we get 
\[
\lr(\Gamma) \vdash_\T \mor_i(M) : \mor_i(B).
\] 
Using \textsc{App-Type} several times, we derive 
\[
\lr(\Gamma) \vdash_\T \lr(A) ~\vec{\mor}(M) ~\lr(M) : \lr^{A ~x}(K)[\vec{x} \leftarrow \vec{\mor}(M), x^* \leftarrow \lr(M)].
\] 
Using \Cref{prop_subst_lr}, we obtain $\lr(\Gamma) \vdash_\T \lr(A ~M) : \lr^{A ~M}(K[x \leftarrow M])$.

\item \underline{\textsc{Conv-Type}}: 
By induction, we have 
\[
\begin{array}{ll}
& \lr(\Gamma) \vdash_\T \lr(M) : \lr(A) ~\vec{\mor}(M) \\
\text{and} & \lr(\Gamma) \vdash_\T \lr(B) : \vec{\mor}(B) \ra \Type.
\end{array}
\]
Since we have $A \equiv_{\beta(\eta)\R} B$ in $\S$, we have $\lr(A) \equiv_{\beta(\eta)\R} \lr(B)$ in $\T$ using \Cref{prop_conv_lr}. 
We derive $\lr(\Gamma) \vdash_\T \lr(M) : \lr(B) ~\vec{\mor}(M)$ using \textsc{Conv-Type}.

\item \underline{\textsc{Conv-Kind}}: 
By induction, we have $\lr(\Gamma) \vdash_\T \lr(A) : \lr^{A}(K)$ and $\lr(\Gamma) \vdash_\T \lr(K') : \Kind$. 
Since $K \equiv_{\beta(\eta)\R} K'$ in $\S$, we have $\lr^{A}(K) \equiv_{\beta(\eta)\R} \lr^{A}(K')$ in $\T$ using \Cref{prop_conv_lr}. 
We derive $\lr(\Gamma) \vdash_\T \lr(A) : \lr^{A}(K')$ using \textsc{Conv-Kind}. \qedhere
\end{itemize}
\end{proof}



\section{Sort-Erasure Translations}
\label{sec_sort}
\reviewernote{This section was heavily revised to emphasize the novelty and technical difficulty of the translations as well as their interplay with rewriting.
We recommend rereading the whole section. But we have marked the most important changes in addition.}

\modif{In this section, we represent sort-erasure translations from hard-sorted to soft-sorted to unsorted logic.
Here ``sort'' is the word that we will use for object-logic types to avoid any confusion with the types of \lpr.
The first translation erases hard sorting and instead captures the sorting relation via a special judgment.
The second translation erases sorts entirely and captures them as unary predicates.
Both of these translations have proved challenging, and to our knowledge, this is the first time that such translations are fully defined and verified.
Our treatment will reveal several subtle critical design choices.

A key insight to formalize these is to adjust the target theories with additional features that allow carrying the sorting properties throughout the translation.
Specifically, we encode dependent pairs to group a term with a proof about it, and we add a dependent variant of implication in order to access such proofs while building propositions.
}

\subsection{Hard-Sorted, Soft-Sorted and Unsorted Logic}

We first formalize unsorted logic \UFOL, soft-sorted logic \SFOL, and hard-sorted logic \HFOL. 

\begin{example}[Unsorted Logic]
In unsorted logic \UFOL, all terms have the generic type $\tm$.
\begin{flalign*}
&\tm : \Type & 
&\Prop : \Type & 
&\Prf : \Prop \ra \Type
\end{flalign*} 
We define an implication $\imp$, along with a rewrite rule that subsumes its natural deduction rules. 
\begin{flalign*}
&\imp : \Prop \ra \Prop \ra \Prop & \\
&\Prf ~(p \imp q) \lra \Prf ~p \ra \Prf ~q & 
\end{flalign*}
We also add the usual universal quantifier to exemplify the treatment of binders:
\begin{flalign*}
&\fa : (\tm \ra \Prop) \ra \Prop & \\
&\modif{\Prf ~(\fa ~p) \lra \Pi x : \tm. ~\Prf ~(p ~x)} & 
\end{flalign*}
\end{example}

Sorted logic arises by adding a constant $\Set$ for object-logic sorts and allows quantification over sorted object-logic terms.
There are two variants to define, going back to the definitions by Curry and Church, which we will refer to as soft-sorted logic \SFOL and hard-sorted logic \HFOL.

\begin{example}[Soft-Sorted Logic]
The basic structure of the theory \SFOL arises from that of \UFOL by adding a type $\Set$ of sorts and an external predicate $\#$ to capture the sorting of terms:
\begin{flalign*}
&\tm : \Type &
&\Set : \Type &
&\Prop : \Type &
&\Prf : \Prop \ra \Type & 
&\# : \tm \ra \Set \ra \Type 
\end{flalign*} 
The definition of the implication is the same as in \UFOL.
But we change the universal quantifier, which is unsorted in \UFOL, to a polymorphic version that takes as an additional argument the sort $a$ over which it quantifies.
\begin{flalign*}
&\fa : \Pi a : \Set. ~(\Pi x : \tm. ~\oft x a \ra \Prop) \ra \Prop & \\
&\modif{\Prf ~(\fa ~a ~p) \lra \Pi x : \tm. ~\Pi h : \oft x a. ~\Prf ~(p ~x ~h)} & 
\end{flalign*}
The body of the quantifier binds two assumptions: the bound variable and a proof that it has the sort $a$.
The latter ensures that bound variables are always well-sorted \modif{and thus that variables range over well-sorted objects only.
Similarly, any extension of \SFOL to $\lambda$-calculus requires guarding the bound variable of the $\lambda$-abstraction.}
\end{example}

\begin{example}[Hard-Sorted Logic]
Hard-sorted logic \HFOL uses the type of sorts $\Set$ and the injection $\El$ that maps a sort to the type of its elements.
\begin{flalign*}
&\Set : \Type &
&\El : \Set \ra \Type &
&\Prop : \Type &
&\Prf : \Prop \ra \Type 
\end{flalign*} 
Thus, object-logic terms of sort $a$ have type $\El ~a$.

The encoding of the implication is the same as for \UFOL and \SFOL.
The polymorphic universal quantifier of \HFOL differs from that of \SFOL by binding a sorted variable as a single variable of the framework:
\begin{flalign*}
&\fa : \Pi a : \Set. ~(\El ~a \ra \Prop) \ra \Prop & \\
&\alli : \Pi a : \Set. ~\Pi p : \El ~a \ra \Prop. ~(\Pi x : \El ~a. ~\Prf ~(p ~x)) \ra \Prf ~(\fa ~a ~p) & \\
&\alle : \Pi a : \Set. ~\Pi p : \El ~a \ra \Prop. ~\Prf ~(\fa ~a ~p) \ra \Pi x : \El ~a. ~\Prf ~(p ~x) & 
\end{flalign*}
\modif{Note that we encode the semantics of the universal quantifier here using two axioms as opposed to the rewrite rule used in \UFOL and \SFOL.
This is due to a technical issue that we will explain in \Cref{subsec_axvrew}.}
\end{example}

\subsection{From Hard-Sorted Logic to Soft-Sorted Logic}
\label{subsec_hard_soft}

\paragraph{\modif{Overview}}
\modif{To translate from \HFOL to \SFOL, the first intuition is to proceed in two steps: we erase the sorting information using a theory morphism, and then we recover it using a logical relation.
This approach is one of the example of~\cite{logical_relations}, with $\lambda$-binding instead of quantifiers.

The first step is to define a morphism, mapping sorts to themselves and \emph{erasing} the hard sort information by mapping every type $\El ~a$ to the type $\tm$. 
The constants $\Set$, $\Prop$ and $\Prf$ are mapped to themselves.
\begin{flalign*}
&\mor(\El) = \lambda a : \Set. ~\tm \\
&\mor(\fa) = \lambda a : \Set. ~\lambda p : \tm \ra \Prop. ~\fa ~a ~(\lambda x : \tm. ~\lambda h : \oft{x}{a}. ~p ~x) &
\end{flalign*} 
Then, in a second step, we recover the sort information by proving that whenever we have $t : \El ~a$ in \HFOL, we can show (i.e., give a term of type) $\oft{\mor(t)}{\mor(a)}$ in \SFOL. 
To do so, we define a unary logical relation on the morphism.
Terms of type $\Set$ are mapped by the morphism to themselves, so $\lr(\Set)$ can be any trivially satisfied predicate, for instance $\fa ~(\lambda p. ~p \imp p)$.
\begin{flalign*}
&\lr(\Set) = \lambda a : \Set. ~\Prf ~(\fa ~(\lambda p. ~p \imp p)) \\
&\mor(\El) = \lambda a : \Set. ~\lambda a^* : \Prf ~(\fa ~(\lambda p. ~p \imp p)). ~\lambda x : \tm. ~\oft{x}{a} &
\end{flalign*} 
So far, we have only translated the \emph{syntax} of the language.
To prove the translation sound, we must also translate the \emph{proof rules}.
However, we encountered a problem when extending the morphism to the proof rules.
The proof rule $\alle$ takes a term argument $x : \El ~a$.
Therefore $\mor(\alle)$ takes a term argument $x : \tm$.
But to define $\mor(\alle)$ in \SFOL, we must have access to the sort-preservation invariant $\oft{\mor(x)}{\mor(a)}$.
Thus, we cannot define the morphism for the proof rules without already using the logical relation on the syntax---whereas the logical relation must be defined after the theory morphism.
The translation in~\cite{logical_relations} worked out because it did not cover the proof rules.

One way out of this is to define a mutually recursive morphism and relation.
This is essentially the approach followed in~\cite{deduktinterp} for particular theories of \lpr.
We have generalized that idea to a systematic meta-theorem for \lpr.
It allowed the morphism to flexibly make use of the invariant established by the relation whenever it was needed.
But the use of this formalism became very complex.

Instead, we have identified a third approach as the most scalable: we pair up the morphism and the translation, and define both at once.
For example, we map $\El ~a$ to the dependent pair type $\Sigma x : \tm. ~\oft x a$.
Thus, every term in the image of the translation always carries its well-sortedness proof.
While seemingly heavyweight in the use of pairs, this approach has the key advantage that it can technically be represented as a theory morphism, thus keeping the framework simple.
}

Of course, the syntax of \lpr does not feature dependent pairs.
We could extend \lpr, but that would cut us off from implementations, like \dk, that do not have dependent pairs.
Alternatively, we could construct another kind of translation that eliminates dependent pairs, but that would again complicate the framework.
However, it turns out we only need very specific instances of dependent pairs for the translation that we can encode in \SFOL.

\paragraph{\modif{Extending the Target Language}}
For our specific translation we only need the types $\Sigma x : \tm. ~\oft x a$ for every $a : \Set$.
The declarations below extend \SFOL to $\SFOL'$ by adding this type, called $\pair ~a$.
\begin{flalign*}
&\pair : \Set \ra \Type & \\
&\mkpair : \Pi a : \Set. ~\Pi x : \tm. ~\oft x a \ra \pair ~a & \\
&\fst : \Pi a : \Set. ~\Pi m : \pair ~a. ~\tm & \\
&\snd : \Pi a : \Set. ~\Pi m : \pair ~a. ~\oft{(\fst ~a ~m)}{a} & 
\end{flalign*} 
\modif{
We also specify rewrite rules, effectively making $\pair ~a$ behave like $\Sigma x : \tm. ~\oft x a$.
}
\begin{flalign*}
&\fst ~a ~(\mkpair ~a ~x ~h) \lra x & \\
&\snd ~a ~(\mkpair ~a ~x ~h) \lra h & \\
&\mkpair ~a ~(\fst ~a ~m) ~(\snd ~a ~m) \lra m &
\end{flalign*} 
\modif{
In particular, the second rewrite rule only preserves typing because of the first rewrite rule: the left-hand side is of type $\oft{(\fst ~a ~(\mkpair ~a ~x ~h))}{a}$ while the right-hand side is of type $\oft{x}{a}$.

\begin{remark}
The combination of these rewrite rules with $\beta$-reduction is not confluent on untyped terms~\cite{klop_pairing}, but is confluent in several typed cases~\cite{pottinger,curien_cosmo}.
We conjecture that it is also confluent on typed terms of \lpr.
\end{remark}
}

Note that $\SFOL'$ is a conservative extension of \SFOL, in the sense that there is no $\SFOL$-type that is uninhabited over \SFOL but inhabited over $\SFOL'$.
Thus, $\SFOL'$ cannot prove any $\SFOL$-proposition that $\SFOL$ cannot prove.
In fact, we could even define $\pair$ if we worked in an \lpr-like framework with dependent pairs.

\paragraph{\modif{Defining the Translation}}
We can now give a morphism $\HS:\HFOL\to\SFOL'$.
\begin{flalign*}
&\mor(\Set) =  \Set & \\
&\mor(\El) =  \lambda a : \Set. ~\pair ~a & \\
&\mor(\Prop) = \Prop & \\
&\mor(\Prf) = \Prf &
\end{flalign*} 
Mapping the implication is straightforward, and the condition on the rewrite rule of $\imp$ is trivially satisfied.
To define $\mor(\fa)$, we have a predicate $p$ that takes a pair as an argument, but we need to use the universal quantifier of soft-sorted logic, in which the predicate takes two arguments sequentially.
The translations for $\alli$ and $\alle$ can be easily derived, as it suffices to pack elements into a pair or unpack them.
\[
\begin{array}{lll}
\mor(\fa) &= &\lambda a : \Set. ~\lambda p : \pair ~a \ra \Prop. ~\fa ~a ~(\lambda x. ~\lambda h. ~p ~(\mkpair ~a ~x ~h)) \\
\mor(\alli) &= &\lambda a : \Set. ~\lambda p : \pair ~a \ra \Prop. ~\\
  &&\lambda H : (\Pi m : \pair ~a. ~\Prf ~(p ~m)). \\
  &&\lambda x : \tm. ~\lambda h : \oft x a. \\
  &&H ~(\mkpair ~a ~x ~h) \\
\mor(\alle) &= &\lambda a : \Set. ~\lambda p : \pair ~a \ra \Prop. \\
  &&\lambda H : \Prf ~(\fa ~a ~(\lambda x. ~\lambda h. ~p ~(\mkpair ~a ~x ~h))) \\
  &&\lambda m : \pair ~a. \\
  &&H ~(\fst ~a ~m) ~(\snd ~a ~m)
\end{array}
\]
\modif{Note that we built a term of type $\Prf ~(p ~(\mkpair ~a ~(\fst ~a ~m) ~(\snd ~a ~m)))$, although $\mor(\alle)$ must return an object of type $\Prf ~(p ~m)$. 
These two types are convertible thanks to the third rewrite rule.}

\modif{Without rewriting, we would have to represent the rewrite rules of dependent pairs as equality axioms in the object-logic.
But these rewrite rules are used extensively in practice: because our translation pairs every term with the proof of its invariant, every operation on terms must unpair its arguments, operate on the components, and pair the results, thus introducing reducible expressions all over the place.
For example, \cite{IR:foundations:10} used a similar pairing approach to translate type theory into set theory.
But it used LF, and practical applications of that translation were bogged down by the resulting overhead of object-logic deduction steps to reduce these expressions.
With rewriting, we can minimize this overhead.
}

\subsection{From Soft-Sorted Logic to Unsorted Logic}
\label{subsec_soft_un}

\paragraph{\modif{Overview}}
To translate from \SFOL to \UFOL, the key intuition is to map every \emph{sort} to a unary \emph{predicate} on unsorted terms, and to use that predicate to relativize the quantifiers.
Then the external sorting relation $\oft t a$ can be mapped to the proposition $\mor(a) ~\mor(t)$. 

\modif{
The critical part of the translation is the treatment of bound variables in the universal quantifier, and this is an open problem for machine-verified language translations.
The intuitive idea is to translate a quantification $\fa ~a ~p$ over some sort $a$ with body $p$ to a relativized unsorted quantifier of the form $\fa ~(\lambda x. ~\mor(a) ~x \imp \mor(p) ~x)$.
But in \SFOL, the body $p$ takes two arguments---a term $x$ and a proof of $\oft x a$.
The translation of $p$ therefore takes two arguments---a term $x$ and a proof of $\mor(a) ~x$.
The straightforward relativization fails here: the application of $\mor(p)$ is ill-typed.

\begin{remark}
We have identified this as a very general issue.
Let us call an object-language expression \emph{proof-carrying} when it has a proof as a sub-expression.
Many languages use a stratified design where the syntax does not depend on the proof calculus and therefore proof-carrying expressions cannot arise.
If the source language allows proof-carrying expressions while the target language does not, it is impossible for the syntax translation to ever make use of any proofs carried by expressions.
\end{remark}

We could try to erase the assumption $\oft x a$, e.g., by translating it to a trivial type.
But then we would run into the same problems as discussed above for $\HFOL\to\SFOL$: we need these assumptions when translating the proof rules.
After much trial and error, we have identified one way to realize the translation at minimal cost: it again involves a subtle extension of the target language.
}

\paragraph{\modif{Extending the Target Language}}
We extend \UFOL to $\UFOL'$ by adding \emph{dependent} implication \cite{IR:foundations:10} where the construction of the second argument may assume the truth of the first.
Normally, dependent implication is useless in \UFOL because it is not possible to construct proof-carrying terms anyway.
But it makes a difference when translating \SFOL-terms that already carry sort assumptions.

$\UFOL'$ replaces the declarations regarding implication by an encoding of dependent implication~\cite{theoryU}.
\begin{flalign*}
&\impd : \Pi p : \Prop. ~(\Prf ~p \ra \Prop) \ra \Prop & \\
&\Prf ~(p \impd q) \lra \Pi h : \Prf ~p. ~\Prf ~(q ~h) & 
\end{flalign*}
\modif{Dependent implication is the Curry-Howard analogue of dependent functions, just like plain implication is the analogue of simple functions.}
It is easy to give a theory morphism from \UFOL to $\UFOL'$ that shows that the usual implication is a special case of the dependent one.

\paragraph{\modif{Defining the Translation}}
We now define a theory morphism $\SU:\SFOL\to\UFOL'$.
The translation of most constants is straightforward:
\begin{flalign*}
&\mor(\tm) = \tm & \\
&\mor(\Set) = \tm \ra \Prop & \\
&\mor(\Prop)  =\Prop & \\
&\mor(\Prf) = \Prf & \\
&\mor(\#) = \lambda x : \tm. ~\lambda a : \tm \ra \Prop. ~\Prf ~(a ~x) \\
&\mor(\imp) = \imp &
\end{flalign*} 
The condition on the rewrite rule of $\imp$ is trivially satisfied.

The translation of the universal quantifier is the key case.
It succeeds now because we can assume the relativizing predicate to hold when building the body:
\begin{flalign*}
&\mor(\fa) =\lambda a : \tm \ra \Prop. ~\lambda p : (\Pi x : \tm. ~\Prf ~(a ~x) \ra \Prop). ~\fa ~(\lambda x : \tm. ~(a ~x) \impd (\lambda h. ~p ~x ~h)) &
\end{flalign*} 

\modif{
The condition on the rewrite rule of $\fa$ is satisfied in $\UFOL'$.
\[
\begin{array}{lll}
\mor(\Prf ~(\fa ~a ~p)) &\equiv_{\beta\R} &\Prf ~(\fa ~(\lambda x : \tm. ~(a ~x) \impd (\lambda h. ~p ~x ~h))) \\
	&\equiv_{\beta\R} &\Pi x : \tm. ~\Prf ~((a ~x) \impd (\lambda h. ~p ~x ~h)) \\
	&\equiv_{\beta\R} &\Pi x : \tm. ~\Pi h : \Prf ~(a ~x). ~\Prf ~(p ~x ~h) \\
	&\equiv_{\beta\R} &\mor(\Pi x : \tm. ~\Pi h : \oft x a. ~\Prf ~(p ~x ~h))
\end{array}
\]
}

\begin{remark}
If we extend \SFOL with a function sort constructor $\arr : \Set \ra \Set \ra \Set$ and with term constructors for $\lambda$-abstraction and application, then the morphism can only be completed if \UFOL is also extended with appropriate unsorted $\lambda$-abstraction and application.
Critically, this unsorted $\lambda$-abstraction must have type $\lam : \Pi a : \tm \ra \Prop. ~(\Pi x : \tm. ~\Prf ~(a ~x) \ra \tm) \ra \tm$, where the first argument defines the domain of application, and where the bound variables are guaranteed to be from that domain.
The morphism fails with variants of \UFOL $\lambda$-abstraction that do not take into account the well-sortedness of $x$.
\end{remark}

\modif{
\subsection{Axioms vs. Rewrites}
\label{subsec_axvrew}

Along the lines of \Cref{ex_deduc_comput}, we can give both a computational and a deductive variant for \HFOL, \SFOL, and \UFOL.
And in each case, we can obtain a morphism $\mor_{\T}$ from the deductive to the computational variant of $\T$.

Typically, if the source theory is deductive, it does not matter if the target theory is deductive or computational---the translations can be established in essentially the same way.
In particular, if the target theory $\T$ is deductive, we can simply compose the translation with $\mor_{\T}$ to obtain a translation into the computational variant of $\T$.
But giving the translation into the computational variant directly can be much simpler because the rewriting in the target theory makes the necessary proofs a lot easier.

It would be unreasonable to expect a translation from a computational to a deductive variant, e.g., from computational \SFOL to deductive \UFOL: if the source theory already identifies certain types, they cannot be distinguished in the translation.

But we were surprised to find out that translation between computational variants can also pose subtle issues.
Indeed, we failed to give the translation from computational \HFOL to computational \SFOL, and that is why we used deductive \HFOL above.
To understand what goes wrong, consider the computational variant of the universal quantifier using the rewrite rule
\[\Prf ~(\fa ~a ~p) \lra \Pi x : \El ~a. ~\Prf ~(p ~x)\]
After applying \HS, the translations of the left-hand side and the right-hand side are not convertible in $\SFOL'$ and therefore the theory morphism is not well-formed: 
\[
\begin{array}{lll}
\mor(\Prf ~(\fa ~a ~p)) &\equiv_{\beta\R} &\Pi x : \tm. ~\Pi h : \oft{x}{\mor(a)}. ~\Prf ~(\mor(p) ~(\mkpair ~\mor(a) ~x ~h)) \\
\mor(\Pi x : \El ~a. ~\Prf ~(p ~x)) &\equiv_{\beta\R} &\Pi x : \pair ~a. ~\Prf ~(\mor(p) ~x)
\end{array}
\]
The problem here is, essentially, that the former is uncurried while the latter is curried.
To our knowledge, there is no rewrite-based system that would allow making these two types convertible.
Even in a hypothetical variant of \lpr with $\Sigma$-types, it is not obvious how such an identification up-to currying could be achieved efficiently.

It is of course possible to add $\Sigma$-types to \lpr, define the translation, and then systematically eliminate the $\Sigma$-types.
But the result can become very unwieldy because the elimination must split all functions that return $\Sigma$-types into two.
To fill this technical gap, we could consider an extension of \lpr with $\Sigma$-types available, but extend the conversion relation in such a way that they are rewritten into plain \lpr.
}

\section{Embedding Data Types}
\label{sec_nat_int}
\newcommand{\Nat}{\ident{Nat}}
\newcommand{\Int}{\ident{Int}}
\newcommand{\Den}{\ident{Den}}
\newcommand{\NI}{\ident{NI}}
\newcommand{\natden}{\ident{natden}}
\newcommand{\intden}{\ident{intden}}

\modif{Our use of dependent implication and dependent pairs in the previous translations is not a one-off trick.
They are a general techniques for realizing subtly difficult translations.
One common application is embedding theorems that identify a smaller data type as a fragment of a larger one.
Here the difficulty is that a binding over the smaller source data type must be relativized when translating it to a binding of the larger target data type.
As an example, we give an embedding translation from natural numbers to integers, in which we need to relativize bindings of integer variables with the property of being non-negative.

This property serves as an invariant the translated terms must satisfy.
Similar to the sort-erasure translations, it is critical to prove this invariant by induction over terms, and this is difficult because the translation must be intertwined with the proof of the invariant.
}

We start by encoding both theories as sorted languages, i.e., as extensions of \HFOL.

\begin{example}[Natural Numbers]
The theory $\HFOL+\Nat$ of natural numbers is built on hard-sorted logic.
But for the sake of example, we encode the proofs of universally quantified propositions computationally, i.e., via a rewrite rule:
\begin{flalign*}
&\fa : \Pi a : \Set. ~(\El ~a \ra \Prop) \ra \Prop & \\
&\Prf ~(\fa ~a ~p) \lra \Pi x : \El ~a. ~\Prf ~(p ~x) &
\end{flalign*}
We define the sort $\nat$ for natural numbers, with the two constructors $0$ and $\succ$. 
The relation $\geq$ is reflexive and transitive. 
\begin{flalign*}
&\nat : \Set & \\
&0 : \El ~\nat & \\
&\succ : \El ~\nat \ra \El ~\nat & \\
&\geq ~: \El ~\nat \ra \El ~\nat \ra \Prop & \\
&\ax_1 : \Pi x : \El ~\nat. ~\Prf ~(x \geq x) & \\
&\ax_2 : \Pi x,y,z : \El ~\nat. ~\Prf ~(x \geq y) \ra \Prf ~(y \geq z) \ra \Prf ~(x \geq z) & 
\end{flalign*}
For any natural number $x$, $\succ ~x$ is greater than $x$. In other words, we have a proof of $\succ ~x \geq x$, and any proof of $x \geq \succ ~x$ leads to an inconsistency.
\begin{flalign*}
&\ax_3 : \Pi x : \El ~\nat. ~\Prf ~(\succ ~x \geq x) & \\
&\ax_4 : \Pi x : \El ~\nat. ~\Prf ~(x \geq \succ ~x) \ra \Pi P : \Prop. ~\Prf ~P &
\end{flalign*}
Finally, we have the induction principle on natural numbers:
\[
\begin{array}{ll}
\rec : &\Pi P : \El ~\nat \ra \Prop. ~\Prf ~(P ~0) \ra \\
 &[\Pi x : \El ~\nat. ~\Prf ~(P ~x) \ra \Prf ~(P ~(\succ ~x))] \ra \\
 &\Pi x : \El ~\nat. ~\Prf ~(P ~x) 
\end{array}
\]
\end{example}

\begin{example}[Integers]
The theory of integers $\HFOL+\Int$ is like $\HFOL+\Nat$, with the sort $\nat$ renamed to $\int$. Additionally, we introduce a predecessor symbol $\pred$, such that $\pred$ and $\succ$ are inverses.
\begin{flalign*}
&\pred : \El ~\int \ra \El ~\int & \\
&\succ ~(\pred ~x) \lra x & \\
&\pred ~(\succ ~x) \lra x &  
\end{flalign*}
\modif{The normal terms of type $\El ~\int$ are in bijection to the integers.}

For any integer $x$, $\pred ~x$ is lower than $x$. 
\begin{flalign*}
&\ax_5 : \Pi x : \El ~\int. ~\Prf ~(x \geq \pred ~x) & \\
&\ax_6 : \Pi x : \El ~\int. ~\Prf ~(\pred ~x \geq x) \ra \Pi P : \Prop. ~\Prf ~P &
\end{flalign*}
The induction principle on integers
\[
\begin{array}{ll}
\rec : &\Pi P : \El ~\int \ra \Prop. ~\Prf ~(P ~0) \ra \\
 &[\Pi x : \El ~\int. ~\Prf ~(x \geq 0) \ra \Prf ~(P ~x) \ra \Prf ~(P ~(\succ ~x))] \ra \\
 &[\Pi x : \El ~\int. ~\Prf ~(0 \geq x) \ra \Prf ~(P ~x) \ra \Prf ~(P ~(\pred ~x))] \ra \\
 &\Pi x : \El ~\int. ~\Prf ~(P ~x) 
\end{array}
\]
involves an additional induction step for $\pred$.
\end{example}

The translation $\HFOL+\Nat\to\HFOL+\Int$ is structurally very similar to the one $\HFOL\to\SFOL$. 
We intuitively map the sort $\nat$ to the sort $\int$, but we need to recover the information that any natural number is mapped to a non-negative integer. 
Here the invariant of the translation is the predicate $x \geq 0$.
Like the translation $\HFOL\to\SFOL$, we will employ dependent pairs.
The only dependent pairs we need are of the form $\Sigma x : \El ~a. ~\Prf ~(p ~x)$.
The definitions below conservatively extend $\HFOL+\Int$ to $\HFOL+\Int'$ with an axiomatization of those dependent pairs.
\begin{flalign*}
&\pair : \Pi a : \Set. ~(\El ~a \ra \Prop) \ra \Set & \\
&\mkpair : \Pi a : \Set. ~\Pi p : \El ~a \ra \Prop. ~\Pi x : \El ~a. ~\Prf ~(p ~x) \ra \El ~(\pair ~a ~p) & \\
&\fst : \Pi a : \Set. ~\Pi p : \El ~a \ra \Prop. ~\El ~(\pair ~a ~p) \ra \El ~a & \\
&\snd : \Pi a : \Set. ~\Pi p : \El ~a \ra \Prop. ~\Pi m : \El ~(\pair ~a ~p). ~\Prf ~(p ~(\fst ~a ~p ~m)) & \\
&\fst ~a ~p ~(\mkpair ~a ~p ~x ~h) \lra x & \\
&\snd ~a ~p ~(\mkpair ~a ~p ~x ~h) \lra h & \\
&\mkpair ~a ~p ~(\fst ~a ~p ~m) ~(\snd ~a ~p ~m) \lra m &
\end{flalign*}
Then we can define $\NI:\HFOL+\Nat\to\HFOL+\Int'$. The constants of hard-sorted logic are mapped to themselves.
The sort of natural numbers is mapped to the sort that pairs an integer and a proof that it is non-negative.
\begin{flalign*}
&\mor(\nat) = \pair ~\int ~(\lambda x. ~x \geq 0) & \\
&\mor(0) = \mkpair ~\int ~(\lambda x. ~x \geq 0) ~0 ~(\ax_1 ~0) & \\
&\mor(\geq) = \lambda m_1,m_2 : \pair ~\int ~(\lambda x. ~x \geq 0). ~(\fst ~\int ~(\lambda x. ~x \geq 0) ~m_1)\geq (\fst ~\int ~(\lambda x. ~x \geq 0) ~m_2) & \\
&\mor(\ax_1) = \lambda m : \pair ~\int ~(\lambda x. ~x \geq 0). ~\ax_1 ~(\fst ~\int ~(\lambda x. ~x \geq 0) ~m) & 
\end{flalign*}
Most of the remaining parameters are defined similarly.
The parameter $\mor(\rec)$ is trickier: it requires a dependent implication, for the same reason as the translation $\HFOL\to\SFOL$. 
It also requires proof irrelevance---the principle stating that two proofs of the same proposition are equal. 
We add dependent implication and an axiom for proof irrelevance to $\HFOL+\Int'$.
\begin{flalign*}
&\impd : \Pi p : \Prop. ~(\Prf ~p \ra \Prop) \ra \Prop & \\
&\Prf ~(p \impd q) \lra \Pi h : \Prf ~p. ~\Prf ~(q ~h) & \\
&\mathsf{proof\_irr} : \Pi p : \Prop. ~\Pi h_1,h_2 : \Prf ~p. ~\Pi q : \Prf ~p \ra \Prop. ~\Prf ~(q ~h_1) \ra \Prf ~(q ~h_2) &
\end{flalign*}
The translation from natural numbers to integers was the running example of~\cite{deduktinterp}, where it was formalized using an intricate construction involving mutually recursive definitions of morphism and logical relation.
It also required more boilerplate such as trivially true invariants that must be carried through the induction.
In contrast, the present translation is itself much simpler  and can be expressed in a simpler framework.

\modif{Our embedding shows that natural numbers are mapped to non-negative integers.
We can extend the theories \Nat and \Int with operations such as addition, and extend the morphism accordingly.
Then the morphism additionally shows that the corresponding operations on integers preserve the invariant.
}

\modif{Moreover, we can use a logical relation to show that the denotations of natural numbers are preserved by the embedding.
First, assume we have a theory $\Den$ in which denotations of numbers are defined.
The semantics of numbers can be represented by morphisms $\natden:\Nat\to\Den$ and $\intden:\Int\to\Den$.
To show the preservation of denotations, we can then give a binary logical relation on the morphisms $\natden$ and $\NI;\intden$, that maps the type $\nat$ to the statement of equality of the two denotations.
}



\section{Implementation for Dedukti}
\label{sec_implem}
\paragraph{Implementation}
\dk\footnote{Available at \url{https://github.com/Deducteam/Dedukti}.} is a proof language based on \lpr. 
We developed a tool, called \texttt{TranslationTemplates}\footnote{Available at \url{https://github.com/Deducteam/TranslationTemplates}.}, that implements the new features developed here.

Because the main applications of \dk are the batch processing of large sets of theorems, our design makes the same trade-offs and optimizes for the batch transport of theorems from a source theory $\S$ to a source theory $\T$.
\texttt{TranslationTemplates} takes two files representing $\S$ and $\T$, and outputs a new file that contains a copy of $\S$ as an extension of $\T$.
All primitive declarations of $\S$ result in gaps that the user needs to fill in---these are the parameters of a theory morphism/logical relation.
Then all defined declarations of $\S$ can simply be copied over.
An additional argument controls if the generated file should capture a theory morphism or a logical relation.
The resulting file can be rechecked by \dk so that our code does not have to be a part of the trusted code base.
The tool is written in OCaml in less than $400$ lines of code and benefits from the \dk kernel and parser.
All the examples of theory morphisms given here have been implemented in \dk and mechanically checked.
\modif{As an additional example, we have implemented a theory morphism from higher-order classical logic to higher-order intuitionistic logic, following~\cite{kurodadukti}.}

\paragraph{Demonstration} We illustrate \texttt{TranslationTemplates} on the theory morphism from \Cref{ex_deduc_comput}. 
For simplicity, we only consider the implication symbol. The source file \verb|deduction.dk| contains the theory where natural deduction rules are encoded via axioms.
\begin{lstlisting}
Prop : Type.
Prf : Prop -> Type.
imp : Prop -> Prop -> Prop.
imp_i : p : Prop -> q : Prop -> (Prf p -> Prf q) -> Prf (imp p q).
imp_e : p : Prop -> q : Prop -> Prf (imp p q) -> Prf p -> Prf q.
\end{lstlisting}
\modif{We prove the theorem $\Pi p : \Prop. ~\Prf ~(p \imp p)$ by taking the proof term $\lambda p : \Prop. ~\mathsf{imp_i} ~p ~p ~(\lambda H : \Prf ~p. ~H)$.}
\begin{lstlisting}
thm lemma_imp : p : Prop -> Prf (imp p p)
  := p => imp_i p p (H => H).
\end{lstlisting}
The target file \verb|computation.dk| contains the theory where natural deduction rules are encoded via rewrite rules. 
Note that the symbol \verb|Prf| is now declared with \verb|def|, because it is definable with rewrite rules.
\begin{lstlisting}
Prop : Type.
def Prf : Prop -> Type.
imp : Prop -> Prop -> Prop.
[p, q] Prf (imp p q) --> Prf p -> Prf q.
\end{lstlisting}
The theory morphism from \verb|deduction.dk| to \verb|computation.dk| generates the following file. 
\begin{lstlisting}
#REQUIRE computation.

def Prop_mu : Type := TODO.
def Prf_mu : Prop_mu -> Type := TODO.
def imp_mu : Prop_mu -> Prop_mu -> Prop_mu := TODO.
def imp_i_mu : p : Prop_mu -> q : Prop_mu -> 
               (Prf_mu p -> Prf_mu q) -> Prf_mu (imp_mu p q)
  := TODO.
def imp_e_mu : p : Prop_mu -> q : Prop_mu -> 
               Prf_mu (imp_mu p q) -> Prf_mu p -> Prf_mu q
  := TODO.
  
thm lemma_imp_mu : p : Prop_mu -> Prf_mu (imp_mu p p)
  := p => imp_i_mu p p (H => H).
\end{lstlisting}
This is a skeleton of the translation, where the parameters must be filled in by the user instead of the \verb|TODO|s. 
\modif{To help the user, the type of each parameter is displayed.}
Such parameters must be expressed in the target theory \verb|computation.dk|.
We can do so following the blueprint of \Cref{ex_deduc_comput}.
The statement and the proof of the theorem \verb|lemma_imp| of the source theory have been \textit{automatically} translated to the target theory. 
The file typechecks provided that the conditions of the theory morphism are fulfilled. 


\section{Conclusion}\label{sec:conc}

\paragraph{Summary}
We have introduced two translation templates---based on theory morphisms and logical relations---for representing meta-theorems for the \lpcm.
Barring an example of a theory morphism in \cite{felicissimo_encodings}, this is the first time that such translation templates are used systematically for a logical framework with rewriting.

\modif{
Throughout our examples, we have dealt with the subtle trade-off between axioms and rewrite rules.
On the one hand, when using rewrite rules instead of axioms, parts of the proofs are directly discharged by the framework through computation.
In particular, the presence of rewrite rules in the target theory simplifies the proof obligations generated by the translation templates.
On the other hand, the presence of rewrite rules in the source theory creates additional constraints that need to be satisfied by the translation templates.
These observations allow us to fully leverage the computational power of the \lpcm to formalize translations between theories.
}

Moreover, we have identified two subtle practices that allow representing meta-theorems that have previously proved challenging:
the use of dependent pairs (as a primitive of the framework or as an ad-hoc conservative extension) and of dependent implication.
This observation is independent of rewriting and applies to other logical frameworks as well.
More generally, it indicates that efficient translations across languages may be critically enabled by deep technical tweaks to the framework or the target language.

We have implemented both templates in the \dk proof language, and we have applied them to mechanically check translations between a number of logics.
These kinds of templates are critical for the interoperability of proof systems, a major goal of the \dk project, as they allow for the batch translation of large libraries along a theory morphism or a logical relation.

\modif{
\paragraph{Reflection Properties}
Both morphisms and logical relations capture preservation-style meta-theorems (``if it holds in the source, it is preserved in the target'').
These correspond to soundness proofs if we think of the source as the syntax and the morphism as the interpretation function that maps the syntax to its semantics.
Such proofs typically proceed by induction on derivations, and that recipe is exactly what morphisms and relations capture.

The dual, reflection-style meta-theorems (``if it holds in the target, it already held in the source'') are typically much harder to prove and would therefore be more interesting to verify in logical frameworks.
Such theorems correspond to completeness theorems.
They are called conservativity of morphisms in~\cite{rabe:cons:24}.

Morphisms can only express that the target theory is strong enough to express the source theory.
Sometimes a pair of isomorphisms can be used to express preservation and reflection and thus show that two theories can define each other.
But in most cases a morphism showing preservation is not an isomorphism even if some reflection property holds.

This is because the reflection property often holds only in some restricted form and not for all expressions.
For example, we can give a morphism $\mor$ from \HFOL to set theory that captures the model-theoretical semantics of the logic as well as its soundness proof.
The completeness of the semantics is a restricted reflection property: $\mor$ reflects provability in the sense that if $\mor(\Prf~p)$ is inhabited then so is $\Prf~p$.
But there is no inverse morphism $\mor^{-1}$ that maps all of set theory back to \HFOL.


One way to formalize such restricted reflection theorems is to give a partial translation from the target theory to the source theory.
For example, \cite{lfcut} gives and verifies such a translation to reflect derivability from a calculus with cut to one without.
If both the source and the target theory have a sound and complete semantics, an alternative is to relate those semantics to each other, as illustrated in~\cite{logical_relations}.
For both kinds of arguments, it is difficult to provide general formalization support at the logical framework level.
We see this as a major challenge problem for logical frameworks.
}

\paragraph{Acknowledgments.}
This publication is based upon work from the action CA20111 \href{https://europroofnet.github.io/}{EuroProofNet} supported by \href{https://www.cost.eu/}{COST} (European Cooperation in Science and Technology).

We thank the reviewers for their relevant comments and suggestions.

\bibliographystyle{alpha}
\bibliography{biblio}

\end{document}